\def\BibTeX{{\rm B\kern-.05em{\sc i\kern-.025em b}\kern-.08em
    T\kern-.1667em\lower.7ex\hbox{E}\kern-.125emX}}
\newcommand{\Gate}[1]{\textsc{#1}}
\newcommand{\cnotgate}{\Gate{CNOT}} 
\newcommand{\czgate}{\Gate{CZ}}
\newcommand{\hgate}{\Gate{H}}
\newcommand{\pgate}{\Gate{P}}
\newtheorem{proposition}{Proposition}
\newtheorem{lemma}{Lemma}
\newtheorem{theorem}{Theorem}
\newtheorem{example}{Example}
\newcommand{\eq}[1]{Eq.~(\ref{eq:#1})}
\renewcommand{\sec}[1]{\hyperref[sec:#1]{Section~\ref*{sec:#1}}}
\newcommand{\ssec}[1]{\hyperref[ssec:#1]{Subsection~\ref*{ssec:#1}}}
\newcommand{\fig}[1]{\hyperref[fig:#1]{Figure~\ref*{fig:#1}}}
\newcommand{\tab}[1]{\hyperref[tab:#1]{Table~\ref*{tab:#1}}}
\newcommand{\lem}[1]{\hyperref[lem:#1]{Lemma~\ref*{lem:#1}}}
\newcommand{\prop}[1]{\hyperref[prop:#1]{Proposition~\ref*{prop:#1}}}
\newcommand{\thm}[1]{\hyperref[thm:#1]{Theorem~\ref*{thm:#1}}}
\newcommand{\ket}[1]{{\left\vert{#1}\right\rangle}}
\newcommand{\qw}[1][-1]{\ar @{-} [0,#1]}
\newcommand{\qwx}[1][-1]{\ar @{-} [#1,0]}
\newcommand{\gate}[1]{*+<.6em>{#1} \POS ="i","i"+UR;"i"+UL **\dir{-};"i"+DL **\dir{-};"i"+DR **\dir{-};"i"+UR **\dir{-},"i" \qw}
\newcommand{\control}{*!<0em,.025em>-=-<.2em>{\bullet}}
\newcommand{\ctrl}[1]{\control \qwx[#1] \qw}
\newcommand{\targ}{*+<.02em,.02em>{\xy ="i","i"-<.39em,0em>;"i"+<.39em,0em> **\dir{-}, "i"-<0em,.39em>;"i"+<0em,.39em> **\dir{-},"i"*\xycircle<.4em>{} \endxy} \qw}
\newcommand{\lstick}[1]{*!R!<.5em,0em>=<0em>{#1}}
\newcommand{\Qcircuit}{\xymatrix @*=<0em>}
\title{Depth optimization of $\czgate$, $\cnotgate$, and Clifford circuits}
\author{Dmitri Maslov and Ben Zindorf}
\affil{\small{IBM Quantum, IBM Thomas J. Watson Research Center, Yorktown Heights, NY 10598}}
\date{\today}
\begin{document}
\maketitle

\begin{abstract}
We seek to develop better upper bound guarantees on the depth of quantum $\czgate$ gate, $\cnotgate$ gate, and Clifford circuits than those reported previously.  We focus on the number of qubits $n\,{\leq}\,$1,345,000 \cite{de2021reducing}, which represents the most practical use case.  Our upper bound on the depth of $\czgate$ circuits is $\lfloor n/2 + 0.4993{\cdot}\log^2(n) + 3.0191{\cdot}\log(n) - 10.9139\rfloor$, improving best known depth by a factor of roughly 2.  We extend the constructions used to prove this upper bound to obtain depth upper bound of $\lfloor n + 1.9496{\cdot}\log^2(n) + 3.5075{\cdot}\log(n) - 23.4269 \rfloor$ for $\cnotgate$ gate circuits, offering an improvement by a factor of roughly $4/3$ over state of the art, and depth upper bound of $\lfloor 2n + 2.9487{\cdot}\log^2(n) + 8.4909{\cdot}\log(n) - 44.4798\rfloor$ for Clifford circuits, offering an improvement by a factor of roughly $5/3$.
\end{abstract}

\section{Introduction} \label{sec:introduction}
Clifford circuits play an important role in quantum computing.  Most prominently, they lie at the core of quantum error correction \cite{nielsen2002quantum}, where they are responsible for both state encoding and state/gate distillation \cite{bravyi2005universal}.  Once error corrected, fault-tolerant computations are often expressed as Clifford+T circuits, directly implying that large chunks of such computations are themselves Clifford circuits.  Clifford circuits play a key role in randomized benchmarking of quantum gates \cite{knill2008randomized, magesan2011scalable}, the study of entanglement \cite{bennett1996mixed}, and shadow tomography \cite{aaronson2020shadow} to name a few more areas of importance.

Clifford circuits can be defined as those quantum transformations computable by the quantum circuits using single-qubit Hadamard gate $\hgate\,{:=}\,\frac{1}{\sqrt{2}}\big(\begin{smallmatrix}1 & 1\\1 & -1 \end{smallmatrix}\big)$, single-qubit Phase gate $\pgate\,{:=}\,\big(\begin{smallmatrix}1 & 0\\0 & i \end{smallmatrix}\big)$, and the entangling $\cnotgate$ gate $\cnotgate\,{:=}\,\left(\begin{smallmatrix}1 & 0 & 0 & 0\\0 & 1 & 0 & 0\\0 & 0 & 0 & 1\\0 & 0 & 1 & 0\end{smallmatrix}\right)$.  In this work, we also utilize the two-qubit $\czgate$ gate which can be defined directly as the transformation $\czgate\,{:=}\,\left(\begin{smallmatrix}1 & 0 & 0 & 0\\0 & 1 & 0 & 0\\0 & 0 & 1 & 0\\0 & 0 & 0 & -1\end{smallmatrix}\right)$ or constructed as a well-known circuit with two Hadamard and one $\cnotgate$ gates. 


Circuit implementations of Clifford transformations have been studied well in the relevant literature.  Optimal Clifford circuits are known for up to 6 qubits \cite{bravyi20206}; however, optimal synthesis of Clifford circuits spanning more than 6 qubits appears to be out of reach.  Asymptotically optimal circuit constructions of arbitrary $n$-qubit Clifford computations are known: a Clifford operation can be implemented with $\Theta\left(n^2/\log(n)\right)$ entangling gates \cite{aaronson2004improved, patel2003efficient} in depth $\Theta\left(n/\log(n)\right)$ \cite{patel2003efficient, jiang2020optimal, de2021reducing}.  No better guarantees, such as asymptotic tightness---meaning asymptotic equality discarding the lower order additive terms, however, are known.  

Due to the 11-stage layered decomposition \cite{aaronson2004improved} over the gate library $\{\hgate,\pgate,\cnotgate\}$, asymptotic analysis of the depth of Clifford circuits relies on the bounds for $\cnotgate$ gate circuits, also known as linear reversible circuits.  The $\cnotgate$ circuit synthesis algorithm offering asymptotically optimal upper bound comes with a high leading constant of $20$---specifically, the depth complexity guarantee \cite{patel2003efficient, jiang2020optimal, de2021reducing} is $\frac{20n}{\log(n)} + O(\sqrt{n}\log(n))$.  Given depth-$2n$ implementation was known since 2007 \cite{kutin2007computation}, it became clear that the asymptotically optimal implementation does not offer an advantage until $n$ becomes very large.  The authors of \cite{de2021reducing} addressed this by introducing an algorithm offering depth upper bound of $4n/3 + 8\lceil\log(n)\rceil$, that outperforms the asymptotically optimal algorithm \cite{patel2003efficient, jiang2020optimal} for $n\,{<}\,1{,}345{,}000$ and outperforms Kutin's et al. algorithm \cite{kutin2007computation} when $n\,{>}\,75$.  One of the results that we report here is a synthesis algorithm that outperforms the combination of all of the above for $70 \,{\leq}\, n \,{\leq}\, 1{,}345{,}000$ while offering the upper bound guarantee of $\lfloor n + 1.9496{\cdot}\log^2(n) + 3.5075{\cdot}\log(n) - 23.4269 \rfloor$---roughly a 25\% reduction over \cite{de2021reducing}, see \lem{cnot}.

In this work, we focus on the $\czgate$, $\cnotgate$, and Clifford circuits spanning no more than $1{,}345{,}000$ qubits.  The number $1{,}345{,}000$ itself originates from \cite{de2021reducing}.  We believe this bound on the number of qubits $n$ covers all useful use cases for $\czgate$, $\cnotgate$, and Clifford circuits.  Indeed, to put this number in perspective, error-correcting codes often span dozens to hundreds of qubits (thousands and tens of thousands are possible albeit regarded to be on the high side \cite{campbell2017roads}), quantum simulations of condensed matter systems need to rely on only slightly more than 54 qubits before they become classically intractable \cite{pednault2019leveraging}, and known likely classically difficult simulations require as few as between 70 and 185 \cite{nam2019low} or between 109 and 111 \cite{reiher2017elucidating} qubits.  To factor a $1000+$ bit integer number using Shor's algorithm---a task widely believed to be intractable classically---only (roughly) $2n$ to $3n$ qubits suffice \cite{beauregard2002circuit, gidney2021factor}.  This qubit count takes additional space needed for high-quality circuit optimization into account.  This points to the high likelihood that the number of qubits a Clifford circuit spans will remain well under $1{,}345{,}000$.

Our goal is to minimize the depth of quantum circuits, which corresponds to time to solution, being perhaps the single most important metric from the consumer's point of view (especially once the fidelity is guaranteed).  Furthermore, in quantum information processing technologies, such as superconducting circuits, where the dominating source of errors is the decoherence, small depth circuits naturally improve the fidelity of the computation compared to large depth circuits.  We measure the depth of circuits by counting the contribution from the two-qubit gates and discarding that from the single-qubit gates.  There are two basic reasons to make this choice.  First, both leading quantum information processing technologies, superconducting circuits and trapped ions, offer single-qubit gates at a much higher clock speed and fidelity compared to the two-qubit gates \cite{IBMQ, debnath2016demonstration}.  Due to available control and as motivated by Euler's angle decomposition, the number of single-qubit pulses applied between the entangling gates is never more than a small constant (e.g., 3).  Thus, the depth by the two-qubit gates describes the depth of the real-life physical implementation rather closely.  Second, the entangling gates we rely on, $\cnotgate$ and $\czgate$, are single-qubit equivalent to each other, each can be obtained with the minimal number of one entangling pulse in both superconducting circuits and trapped ions technologies, and neither of the two directly corresponds to the physical qubit-to-qubit interaction (such as ZX in superconducting circuits and XX in trapped ions \cite{IBMQ, debnath2016demonstration}).  Thus, both $\cnotgate$ and $\czgate$ gates are available simultaneously, and their implementation costs are roughly equal---independently of the underlying technology used to implement the desired circuits.

Our work first focuses on the $\czgate$ circuits.  $\czgate$ circuits are employed in the short layered decomposition of Clifford circuits \cite{bravyi2021hadamard}, thus allowing to upper bound the depth of Clifford circuits more efficiently than would otherwise be possible with the reduction of -CZ- layers to -CNOT- and -P- layers.  A $\czgate$ circuit can be implemented over $\czgate$ gates in depth $n{-}1$ for even $n$ and depth $n$ for odd $n$.  This can be established directly, or by employing Vizing's theorem \cite{vizing1964estimate}.  One may also show that the depth cannot be reduced further unless other gates are allowed.  In our work, we employ $\cnotgate$ gates and show how this helps to reduce the depth of $\czgate$ circuits roughly by a factor of $2$ (\thm{1}).  We utilize depth-efficient implementations of $\czgate$ circuits to construct depth-optimized $\cnotgate$ and Clifford circuits.

\section{Circuit depth guarantees}

\subsection{CZ circuits} \label{ssec:cz}

We first focus on the depth-efficient no ancilla implementation of the elements of the finite group generated by $\czgate$ gates over $n$ qubits. Recall the following well-known properties of $\czgate$ gates: $\czgate(i,j) \,{=}\, \czgate(j,i)$, $\czgate(i,j)^2$ equals the identity, and all $\czgate$ gates commute.  These properties directly imply that any $\czgate$ circuit can be represented by a zero-diagonal upper triangular binary matrix $M \in \mathbb{F}_2^{n\times n}$, where $m_{i,j}\,{=}\,1$ for $i\,{<}\,j$ iff the gate $\czgate(i,j)$ is applied (an odd number of times).  The task of implementing a transformation described by the matrix $M$ can therefore be solved by applying a set of gates that zero out all of the entries of matrix $M$. 

We first focus on developing a small-depth circuit implementing a $\czgate$ transformation $M1$ that can be described by a ``rectangular'' $k\,{\times}\,m$ region (over non-overlapping sets of $k$ and $m$ qubits) with ones in the matrix $M$; the rest of the matrix $M$ elements are zeroes.  A straightforward implementation of such transformation can be accomplished in depth $\max\{k,m\}$ by a circuit with $km$ $\czgate$ gates.  Our construction described below thus offers an exponential advantage over the na\"{i}ve implementation.  Formally, 

\begin{lemma}\label{lem:rectangles}
Let $A:=\{a_1,a_2,...,a_k\}$ and $B:=\{b_1,b_2,...,b_m\}$ be non-overlapping sets of qubits.  A $\czgate$ transformation $M1$ described as the set of gates $\czgate(a,b)$ for all $a \,{\in}\, A$ and $b \,{\in}\, B$ can be implemented in depth $2{\cdot}\max\{\lceil \log(k) \rceil, \lceil \log(m) \rceil\}$. 
\end{lemma}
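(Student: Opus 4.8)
The plan is to exploit the fact that the target transformation $M1$ is diagonal: it multiplies a computational basis state $\ket{x}$ by $(-1)^{\left(\sum_{a\in A}x_a\right)\left(\sum_{b\in B}x_b\right)}$, so over $\mathbb{F}_2$ the accumulated phase depends only on the parities $p_A:=\bigoplus_{a\in A}x_a$ and $p_B:=\bigoplus_{b\in B}x_b$. The engine of the construction is the following $\mathbb{F}_2$ identity: applying a single $\cnotgate(a_{i'},a_i)$, so that $a_i$ afterwards carries $x_{a_i}\oplus x_{a_{i'}}$, turns one later gate $\czgate(a_i,b)$ into the combined phase of $\czgate(a_i,b)\czgate(a_{i'},b)$, because $x_{a_i}x_b + x_{a_{i'}}x_b = (x_{a_i}\oplus x_{a_{i'}})x_b \pmod 2$. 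I would first record this as a \emph{folding} step: one depth-$1$ layer of disjoint $\cnotgate$s pairing the elements of $A$ among themselves and the elements of $B$ among themselves reduces the complete-bipartite $\czgate$ task on $A\times B$ to the same kind of task on the two sets of (roughly half as many) representative qubits, since the representatives' current values XOR back to $p_A$ and $p_B$ respectively.

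With the folding step in hand, the construction is a straightforward divide-and-conquer. First I would fold both $A$ and $B$ simultaneously---this is legal because the two sides are disjoint, so the two folding layers combine into a single depth-$1$ $\cnotgate$ layer---then recurse on the representative sets of sizes $\lceil k/2\rceil$ and $\lceil m/2\rceil$, and finally undo the fold with the inverse depth-$1$ $\cnotgate$ layer. The recursion is unrolled until both sides have size at most $2$, at which point the residual task (the $1\,{\times}\,1$, $1\,{\times}\,2$, $2\,{\times}\,2$ cases and their transposes) is discharged directly in depth at most $2$ using plain $\czgate$ gates; for instance the $2\,{\times}\,2$ case is the two layers $\{\czgate(a_1,b_1),\czgate(a_2,b_2)\}$ and $\{\czgate(a_1,b_2),\czgate(a_2,b_1)\}$.

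For the depth accounting, set $r:=\max\{\lceil\log(k)\rceil,\lceil\log(m)\rceil\}$. Since ceiling-halving is monotone and maps any size at most $2^r$ to at most $2$ after $r-1$ applications, after $r-1$ folding rounds both sides have size at most $2$. The recursion thus contributes $r-1$ compute layers and $r-1$ uncompute layers, and the base case contributes at most $2$, for a total of $2(r-1)+2 = 2r$, as claimed; odd and unequal sizes are absorbed by the ceilings, since an unpaired qubit is simply carried unchanged into the next round.

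The step I expect to require the most care is pinning down the constant exactly, i.e.\ ruling out a spurious additive ``$+1$''. A naive ``compute all parities onto one qubit, apply one $\czgate$, uncompute'' strategy costs $2r+1$; the savings come precisely from halting the recursion at the $2\,{\times}\,2$ base case and implementing it directly in depth $2$, rather than folding it down to a single $\czgate$ and paying one extra uncompute layer. I would therefore take extra care to (i) verify the folding identity over $\mathbb{F}_2$ together with the bookkeeping for odd and unequal sizes, and (ii) confirm that the per-round fold/unfold layers at different recursion depths act on nested but internally disjoint qubit sets, so that each is genuinely depth $1$ and the layers compose without collision.
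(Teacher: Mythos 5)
Your proposal is correct and is essentially the paper's own construction: compute the parities $p_A$ and $p_B$ by logarithmic-depth \cnotgate{} trees, exploit $(-1)^{p_A p_B}$, and uncompute. Your device for avoiding the spurious ``$+1$''---halting the fold at the $\le 2\times\le 2$ base case and implementing it directly as two layers of \czgate{} gates---is exactly the paper's rewriting of the three middle layers (last compute \cnotgate, central \czgate, first uncompute \cnotgate) into a depth-2 \czgate{} pattern, just packaged as a recursion base case instead of a peephole identity.
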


\begin{proof}
First, recall that the action of $\czgate(x,y)$ is accomplished by the mapping $\ket{x,y} \mapsto (-1)^{xy}\ket{x,y}$, i.e., it can be described as the addition of phase $(-1)^{xy}$ to $\ket{x,y}$.  Thus, the phase transformation performed by $M1$ is described as  

\begin{eqnarray*}
\prod_{a \in A, b\in B}(-1)^{a\cdot b} = (-1)^{\left( \displaystyle\bigoplus_{i=1..k,j=1..m} a_ib_j \right)} 
= (-1)^{(a_1\oplus a_2 \oplus ... \oplus a_k)(b_1\oplus b_2 \oplus ... \oplus b_m)}.
\end{eqnarray*}

\noindent The latter term can be implemented by a single $\czgate$ gate acting on qubits carrying the values $a_1\oplus a_2 \oplus ... \oplus a_k$ and $b_1\oplus b_2 \oplus ... \oplus b_m$.  Those linear combinations can be implemented in logarithmic depth (to both compute them and uncompute after applying $\czgate$) by a $\cnotgate$ gate circuit, leading to the overall depth of $2{\cdot}\max\{\lceil \log(k) \rceil, \lceil \log(m) \rceil\}+1$. 

We next explain how to reduce the depth by $1$, leading to the advertised complexity. To accomplish the reduction, we focus on the three central layers of the constructed circuit.  Observe that the middle gate is always a single $\czgate$, and logarithmic-depth EXOR (exclusive OR, also known as modulo two addition) calculation of qubits in the sets $A$ and $B$ ends with a single $\cnotgate$ gate.  Because of the varying depths of the $\cnotgate$ parts for sets $A$ and $B$, the three middle stages come in the following three flavors,

\begin{center}
	\includegraphics[width=0.48\textwidth]{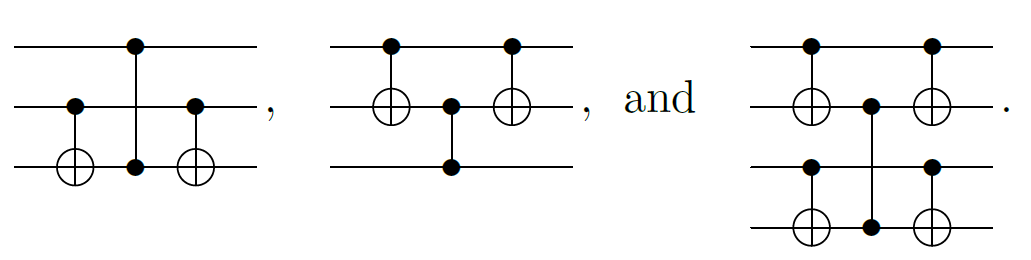}
\end{center}

Each can correspondingly be rewritten in depth two, as follows:

\begin{center}
	\includegraphics[width=0.4\textwidth]{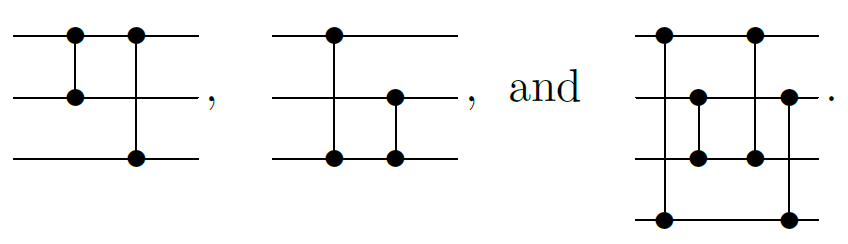}\par
\end{center}

We illustrated the resulting circuit in \fig{45rectangle} for $k{=}4$ and $m{=}5$. 
\end{proof}

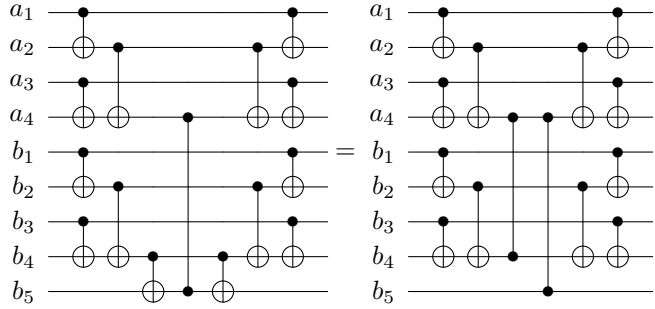
\begin{figure}[t]
\[ 
\Qcircuit @C=0.5em @R=0.5em @!{
\lstick{a_1} & \ctrl{1}  & \qw       & \qw       & \qw       & \qw       & \qw       & \ctrl{1}  & \qw \\
\lstick{a_2} & \targ     & \ctrl{2}  & \qw       & \qw       & \qw       & \ctrl{2}  & \targ     & \qw \\
\lstick{a_3} & \ctrl{1}  & \qw       & \qw       & \qw       & \qw       & \qw       & \ctrl{1}  & \qw \\
\lstick{a_4} & \targ     & \targ     & \qw       & \ctrl{5}  & \qw       & \targ     & \targ     & \qw \\
\lstick{b_1} & \ctrl{1}  & \qw       & \qw       & \qw       & \qw       & \qw       & \ctrl{1}  & \qw \\
\lstick{b_2} & \targ     & \ctrl{2}  & \qw       & \qw       & \qw       & \ctrl{2}  & \targ     & \qw \\
\lstick{b_3} & \ctrl{1}  & \qw       & \qw       & \qw       & \qw       & \qw       & \ctrl{1}  & \qw \\
\lstick{b_4} & \targ     & \targ     & \ctrl{1}  & \qw       & \ctrl{1}  & \targ     & \targ     & \qw \\
\lstick{b_5} & \qw       & \qw       & \targ     & \ctrl{0}  & \targ     & \qw       & \qw       & \qw
}
\hspace{1mm}\raisebox{-18.5mm}{=}\hspace{7mm}
\Qcircuit @C=0.5em @R=0.5em @!{
\lstick{a_1} & \ctrl{1}  & \qw       & \qw       & \qw       & \qw       & \ctrl{1}  & \qw \\
\lstick{a_2} & \targ     & \ctrl{2}  & \qw       & \qw       & \ctrl{2}  & \targ     & \qw \\
\lstick{a_3} & \ctrl{1}  & \qw       & \qw       & \qw       & \qw       & \ctrl{1}  & \qw \\
\lstick{a_4} & \targ     & \targ     & \ctrl{4}  & \ctrl{5}  & \targ     & \targ     & \qw \\
\lstick{b_1} & \ctrl{1}  & \qw       & \qw       & \qw       & \qw       & \ctrl{1}  & \qw \\
\lstick{b_2} & \targ     & \ctrl{2}  & \qw       & \qw       & \ctrl{2}  & \targ     & \qw \\
\lstick{b_3} & \ctrl{1}  & \qw       & \qw       & \qw       & \qw       & \ctrl{1}  & \qw \\
\lstick{b_4} & \targ     & \targ     & \ctrl{0}  & \qw       & \targ     & \targ     & \qw \\
\lstick{b_5} & \qw       & \qw       & \qw       & \ctrl{0}  & \qw       & \qw       & \qw
}
\]
\caption{Depth-6 implementation of a ``rectangular'' set of gates $\{\czgate(a,b)\}$ for all $a \in A\,{=}\,\{a_1,a_2,a_3,a_4\}$ and $b \in B\,{=}\,\{b_1,b_2,b_3,b_4,b_5\}$: left side shows basic circuit construction, and right side includes the reduction of the depth by $1$.}
\label{fig:45rectangle}
\end{figure}

We next focus on a more complex version of the rectangular region $M1$ defined as the transformation $M01$ computed by a {\em subset} of $\czgate(a,b)$ (rather than all for the case of $M1$), where $a \in A$, $b \in B$, and $A \cap B = \emptyset$.  We show that $M01$ can be implemented in depth $\max\{\lfloor k/2\rfloor,\lfloor m/2 \rfloor\} + 2{\cdot}\max\{\lceil\log(k)\rceil,\lceil\log(m)\rceil\}$.

\begin{lemma}\label{lem:M01}
The transformation $M01$ over non-overlapping sets $A$ and $B$ with $k$ and $m$ qubits each can be implemented as a depth $\max\{\lfloor k/2\rfloor,\lfloor m/2 \rfloor \} + 2{\cdot}\max\{\lceil\log(k)\rceil,\lceil\log(m)\rceil\}$ circuit.
\end{lemma}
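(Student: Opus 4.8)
The plan is to recast the task in linear-algebraic terms and then split the work into a logarithmic-depth \emph{sparsification} phase, built from \lem{rectangles}, and a linear-depth phase of bare $\czgate$ layers. Encode $M01$ by its biadjacency matrix $N\in\mathbb{F}_2^{k\times m}$, where $N_{ij}=1$ iff the gate $\czgate(a_i,b_j)$ is present; exactly as in \lem{rectangles}, the phase applied by $M01$ is $(-1)^{a^\top N b}$. A \emph{full rectangle} $A'\times B'$ corresponds to the rank-one matrix $\mathbf{1}_{A'}\mathbf{1}_{B'}^\top$, and \lem{rectangles} implements it in depth $2\max\{\lceil\log|A'|\rceil,\lceil\log|B'|\rceil\}$; because $\czgate$ is self-inverse, overlaying such a rectangle onto $N$ simply adds, over $\mathbb{F}_2$, the corresponding rank-one block. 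It is also useful to read $N$ as a bipartite graph $G$ on parts $A,B$: two gates can fire in the same circuit layer only if they act on disjoint qubits, so a set of $\czgate$ gates executable in one layer is precisely a matching of $G$, and the number of bare $\czgate$ layers needed equals the edge-chromatic number of $G$, which by K\"onig's theorem is its maximum degree $\Delta(G)$.

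Assume without loss of generality $k\le m$, so the target depth is $\lfloor m/2\rfloor+2\lceil\log m\rceil$. If $\Delta(G)\le\lfloor m/2\rfloor$ we are already done using only $\czgate$ layers, so the whole difficulty is the dense case. The key idea is to spend the $2\lceil\log m\rceil$ budget on a constant number of rectangle operations that \emph{reduce the maximum degree} of $G$ to at most $\lfloor m/2\rfloor$. Concretely, let $R=\{i:\deg(a_i)>\lfloor m/2\rfloor\}$ and apply the single rectangle $R\times B$; since $\czgate$ is self-inverse this replaces each heavy row $i\in R$ by its complement, whose new degree $m-\deg(a_i)<\lfloor m/2\rfloor$ is light, while leaving the light rows untouched. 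A symmetric flip handles columns of degree exceeding $\lfloor m/2\rfloor$, which can occur only when $m<2k$. Each such flip is a rank-one (hence single-rectangle) update, and the compute-a-parity steps it needs can share one depth-$\lceil\log m\rceil$ $\cnotgate$ tree at the front and one at the back, so that the entire sparsification phase fits inside depth $2\lceil\log m\rceil$.

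With the degrees capped at $\lfloor m/2\rfloor$, I would finish by invoking K\"onig's edge-colouring theorem to partition the residual graph into at most $\lfloor m/2\rfloor$ matchings and realise each matching as one layer of parallel $\czgate$ gates, for a total of $\lfloor m/2\rfloor+2\lceil\log m\rceil=\max\{\lfloor k/2\rfloor,\lfloor m/2\rfloor\}+2\max\{\lceil\log k\rceil,\lceil\log m\rceil\}$, as claimed.

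The main obstacle is the depth accounting of the sparsification phase, because the row flip and the column flip \emph{interact}: toggling the rows in $R$ changes every column degree by up to $|R|$, so a naive ``flip rows, then flip columns'' can push a just-fixed column (or row) back above $\lfloor m/2\rfloor$, and one must argue that a bounded number of jointly chosen flips reaches a configuration in which every degree is $\le\lfloor m/2\rfloor$ simultaneously. Equally delicate is packing those flips, together with their compute/uncompute $\cnotgate$ trees, into a single shared envelope of depth $2\lceil\log m\rceil$ rather than paying $2\lceil\log m\rceil$ per flip, while tracking the floors and the $k$-versus-$m$ cases so that the residual degree bound is exactly $\max\{\lfloor k/2\rfloor,\lfloor m/2\rfloor\}$. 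Once that combinatorial bound on the degree is secured, the K\"onig step is routine.
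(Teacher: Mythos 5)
Your overall architecture matches the paper's: a logarithmic-depth sparsification built from \lem{rectangles} that caps all degrees at roughly half, followed by K\"onig/Vizing edge colouring of the residual bipartite graph into $\max\{\lfloor k/2\rfloor,\lfloor m/2\rfloor\}$ matchings, each realised as one layer of parallel $\czgate$ gates. However, the two difficulties you explicitly defer at the end are precisely the content of the lemma, and neither is closed in your write-up, so as it stands the argument has a genuine gap.

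First, the existence of a \emph{simultaneous} choice of row and column flips after which every row and every column is light. Your scheme (flip all heavy rows at once, then all heavy columns) fails for the reason you yourself note: flipping a set $C$ of columns changes row $i$'s weight to $\deg(a_i)+|C|-2c_i$, which can push a previously light row back above half, and alternating the two phases has no obvious termination bound. The paper resolves this with a local-search/potential argument: cycle through individual rows and columns, flipping any one whose flip strictly decreases its own weight (toggling its membership in $A^\prime$ or $B^\prime$); each such flip strictly decreases the total number of ones in the matrix, so the process terminates, and at termination every row has weight at most $m/2$ and every column at most $k/2$, since otherwise a flip would still help. Second, the depth accounting. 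Your two flips are rectangles of the form $R\times B$ and $A\times C$, which share qubits and therefore cannot be parallelised, nor can their compute/uncompute $\cnotgate$ trees simply be ``shared''; done sequentially they cost $4\max\{\lceil\log k\rceil,\lceil\log m\rceil\}$ and blow the budget. The paper's fix is to observe that flipping rows $A^\prime$ and columns $B^\prime$ toggles entry $(i,j)$ iff exactly one of $i\in A^\prime$, $j\in B^\prime$ holds, so the net update equals the union of the two rectangles $A^\prime\times(B{\setminus}B^\prime)$ and $(A{\setminus}A^\prime)\times B^\prime$; these involve four pairwise disjoint qubit sets, so \lem{rectangles} implements both fully in parallel in depth $2\max\{\lceil\log k\rceil,\lceil\log m\rceil\}$. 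Both ideas are short, but without them the proof is incomplete.
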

\begin{proof}
The transformation $M01$ can be written as a Boolean matrix $\{m01_{i,j}\}|_{i=1..k,j=1..m}$, where $m01_{i,j}$ denotes the presence (1) or the absence (0) of the gate $CZ(a_i,b_j)$.  By a slight abuse of language $M01$ can be interpreted as a rectangle $A{\times}B$ with zeroes and ones.  To implement $M01$ as an efficient circuit, we apply a logarithmic depth circuit that reduces $M01$ to a transformation $M01^\prime$ such that the weight (the number of ones) of rows in it is no more than $\lfloor m/2 \rfloor$ and the weight of columns is no more than $\lfloor k/2 \rfloor$. Since $M01^\prime$ can be interpreted as an adjacency matrix of a bipartite graph, the edge coloring problem can be solved using precisely $\max\{\lfloor k/2\rfloor,\lfloor m/2 \rfloor \}$ colors \cite{vizing1964estimate, cole1982edge}.  Edges of the same color correspond to individual $\czgate$ gates implementable in depth $1$, and thus the number of colors describes the $\czgate$ gate circuit depth.

We next show how to reduce $M01$ to $M01^\prime$ by a logarithmic depth circuit.  To this end, we first show how to select a set of rows and columns of the matrix $M01$ that, when inverted, simultaneously reduce the row and column weights to no more than a half, and next express this row and column inversion transformation as a logarithmic depth circuit.

To select rows and columns, start with the empty set $S$.  Cycle sequentially through all rows and columns in an infinite loop.  If inverting an entire given row/column reduces the number of ones in it, perform the inversion and add this row/column to the set $S$, or if it is already there remove it.  Each row/column addition/removal operation reduces the number of ones in the matrix $M01$ by at least one, thus this algorithm will run out of options to invert a row/column and thus can be terminated after no more than $km(k{+}m)$ steps.  When it terminates, $M01$ has been transformed to $M01^\prime$ with row and column weights of no more than a half.

Denote the sets of rows and columns identified in the previous paragraph as $A^\prime$ and $B^\prime$, correspondingly.  To implement this set of row and column flips, observe that rather than implementing the rectangles $A \times B^\prime$ (implements all columns) and $A^\prime \times B$ (implements all rows) sequentially, one could instead implement the rectangles $A{\setminus}A^\prime \times B^\prime$ and $A^\prime \times B{\setminus}B^\prime$ in parallel, since the qubit sets $A^\prime$, $A{\setminus}A^\prime$, $B^\prime$, and $B{\setminus}B^\prime$ do not overlap.  According to \lem{rectangles}, this can be done in depth

\begin{eqnarray*}
2{\cdot}\max\{\lceil \log(|A^\prime|) \rceil, \lceil\log(|A{\setminus}A^\prime|)\rceil, \lceil\log(|B^\prime|)\rceil, \lceil\log(|B{\setminus}B^\prime|)\rceil\}
\leq 2{\cdot}\max\{\lceil\log(k)\rceil,\lceil\log(m)\rceil\}.
\end{eqnarray*}

Adding the cost of the transformation $M01 \mapsto M01^\prime$ with the cost of the implementation of $M01^\prime$ reveals the desired depth figure.
\end{proof}

We now have enough instrument to prove the main result of this section.

\begin{theorem}\label{thm:1}
For $n \,{\in}\, [39..1{,}345{,}000]$ an $n$-qubit $\czgate$ transformation $M$ can be implemented by a depth $\lfloor n/2 + 0.4993{\cdot}\log^2(n) + 3.0191{\cdot}\log(n) - 10.9139\rfloor$ circuit.
\end{theorem}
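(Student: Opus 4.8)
The plan is to implement a general $\czgate$ transformation $M$ by recursive bisection of its zero-diagonal upper-triangular matrix, reducing everything to the rectangular building blocks of \lem{rectangles} and \lem{M01}. Partition the $n$ qubits into a top half $H_1$ of $\lceil n/2\rceil$ qubits and a bottom half $H_2$ of $\lfloor n/2\rfloor$ qubits. The matrix $M$ then splits into three pieces: the upper-triangular block $M_{11}$ supported on $H_1$, the upper-triangular block $M_{22}$ supported on $H_2$, and the rectangular block $M_{12}$ recording every gate $\czgate(a,b)$ with $a \in H_1$, $b \in H_2$ present in $M$. Since $M_{11}$ and $M_{22}$ act on disjoint qubit sets they can be synthesized in parallel, and because all $\czgate$ gates commute the three pieces may be applied in any order. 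The block $M_{12}$ is exactly an $M01$ transformation over the non-overlapping sets $H_1$ and $H_2$, so \lem{M01} implements it in depth $\lfloor n/4\rfloor + 2\lceil \log(\lceil n/2\rceil)\rceil$, up to the rounding between the two halves.

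First I would set up the resulting depth recursion $D(n) = D(\lceil n/2\rceil) + \lfloor n/4 \rfloor + (\text{logarithmic overhead})$, with a small base case handled directly. Unrolling the matching contribution gives a geometric sum $\tfrac{n}{4} + \tfrac{n}{8} + \cdots \to \tfrac{n}{2}$, which produces the leading $n/2$ term of the claimed bound; this part is routine. The logarithmic contributions, accumulated over the $\approx \log(n)$ levels of recursion, are what generate the $\Theta(\log^2(n))$ correction.

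The main obstacle is pinning down the constant in front of $\log^2(n)$. Charging the full $2\lceil\log\rceil$ cost of \lem{M01} at every level would sum to roughly $\log^2(n)$, twice the claimed $0.4993 \approx 1/2$. To halve it, I would amortize the parity (\cnotgate) computation and uncomputation across recursion levels: rather than computing and then immediately uncomputing the EXOR combinations inside each $M01$ block independently, the uncompute stage of one level can be merged with, or pipelined into, the compute stage of the adjacent level, so that each level contributes only $\approx \log(n/2^{i})$ rather than $2\log(n/2^{i})$. Summing $\sum_i \log(n/2^{i}) = \tfrac12(\log^2(n) - \log(n))$ then yields the observed $\tfrac12\log^2(n)$ behavior. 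Establishing that this sharing is always legal, i.e.\ that the interspersed $\czgate$ matchings never obstruct the merged \cnotgate stages, is the key technical point.

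Finally, the specific constants $0.4993$, $3.0191$, $-10.9139$ and the restriction $n \in [39..1{,}345{,}000]$ indicate that the clean closed form is an explicit upper bound fitted to the exact, floor/ceiling-laden solution of the recursion over this finite range. I would therefore close the argument by (i) writing the exact recursion solution as a sum carrying its integer roundings, (ii) bounding it analytically by a polynomial in $\log(n)$ with the stated coefficients, and (iii) verifying the inequality across the stated range, combining a monotonicity/asymptotic argument for large $n$ with a direct finite check near the lower endpoint $n=39$, below which the bound need not hold.
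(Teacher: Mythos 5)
Your setup matches the paper exactly: the same bisection into halves $A$ and $B$, the same reduction of the cross block to an $M01$ rectangle handled by \lem{M01}, the same recursion $d(n) \le d(\lceil n/2\rceil) + \lfloor \lceil n/2\rceil/2\rfloor + 2\lceil\log(n/2)\rceil$, and the same strategy of numerically fitting a closed-form upper bound to the floor/ceiling-laden recursion over the finite range. You also correctly diagnose the crux: charging $2\lceil\log\rceil$ per level gives a $\log^2(n)$ coefficient near $1$ (the paper states $0.9937$ for this simpler recursion), so something must be saved to reach $0.4993$.

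The gap is in how you propose to achieve that saving. Your idea --- merging the uncompute stage of one level's EXOR circuit with the compute stage of the next level's --- is left entirely unjustified, and it is not clear it works: the two levels use EXORs over different, non-nested selections ($A'$, $A\setminus A'$, $B'$, $B\setminus B'$ at one level versus $AA'$, $AA\setminus AA'$, etc.\ at the next), and the intervening $\czgate$ matchings of depth $\approx n/4$ sit between them, so there is no adjacency to exploit for cancellation. The paper's actual mechanism is different and concrete: it processes two recursion levels at once, observes that the eight parity sets arising across the two levels have a common refinement into $16$ pairwise disjoint sets $S_{i,j,k}$ each of size at most $\lceil\lceil n/2\rceil/2\rceil$, computes EXORs over all $16$ sets in a single \cnotgate\ stage of depth $\lceil\log\lceil\lceil n/2\rceil/2\rceil\rceil$ (and uncomputes once at the end), and then realizes both levels' T-transformation rectangles as constant-depth $4{\times}4$ and $2{\times}2$ $\czgate$ rectangles on the parity-carrying qubits, costing an additive $4+2=6$. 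This gives $2\lceil\log\lceil\lceil n/2\rceil/2\rceil\rceil + 6$ per \emph{pair} of levels instead of per level, which is what halves the $\log^2$ coefficient. Without this (or a worked-out proof that your pipelining is always legal and saves a full factor of two), the argument only establishes the weaker bound with coefficient $\approx 1$ on $\log^2(n)$, not the stated theorem.
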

\begin{proof}
Let $d(n)$ denote the depth of $\czgate$ circuits over $n$ qubits.  We start the proof by recalling that an $n$-qubit $\czgate$ circuit can be implemented in depth $n-\left\lceil\left\lceil\frac{n{+}1}{2}\right\rceil{-}\frac{n{+}1}{2}\right\rceil$ by the reduction to graph coloring problem \cite{vizing1964estimate}, and thus a simple upper bound holds,
\begin{equation}\label{eq:cz1}
d(n) \leq n-\left\lceil\left\lceil\frac{n{+}1}{2}\right\rceil{-}\frac{n{+}1}{2}\right\rceil.
\end{equation}
For odd $n$, the maximal number of colors, as given by the Vizing's theorem \cite{vizing1964estimate}, is needed.  For even $n$, a widely known simple geometric construction shows that $n{-}1$ colors suffice: take $n{-}1$ points on the plane as vertices of a regular polygon, with the last $n$-th point at its center.  Each of the $n{-}1$ colors applies to the segment joining the point at the center with a selected vertex of the polygon, and all segments joining polygon vertices perpendicular to it.  One may convince themselves that all possible segments joining any two of the $n$ points considered are properly colored, and thus $n{-}1$ colors suffice.  Furthermore, if one is limited to using the $\czgate$ gates to implement $\czgate$ circuits, the bound in \eq{cz1} is tight.  This follows from the counting argument, noting that the largest $\czgate$ circuit contains $\frac{(n-1)n}{2}$ $\czgate$ gates.  Thus, to implement $\czgate$ circuits in shorter depth, one must thus rely on other gates, which is what we do.

We next introduce a recursive construction that is responsible for reducing the above depth figure to almost $n/2$ and analyze it carefully using two methods.  In our recursion, at each step the set of qubits is broken into two non-overlapping sets, $A$ with first $\lceil n/2 \rceil$ qubits and $B$ with last $\lfloor n/2 \rfloor$ qubits.  Operation $M$ can be expressed in three parts: $M$ restricted to the set $A$, $M$ restricted to the set $B$, and $M01$ over the rectangle $A{\times B}$.  Since the first two can be implemented in parallel, the overall depth can be upper bounded as 
\[
d(n) \leq d(\lceil n/2 \rceil) + d(M01),   
\]
where $d(M01)$ is the depth of the implementation of $M01$.  In other words, by \lem{M01},
\[
d(n) \leq d(\lceil n/2 \rceil) + \left\lfloor \lceil n/2 \rceil /2 \right\rfloor + 2{\cdot}\lceil\log(n/2)\rceil.
\]
Combining the above with \eq{cz1} allows to obtain the following recursion, 
\begin{equation}\label{eq:cz2}
\begin{cases}
d(n) \leq  \text{MIN} \{ n-\left\lceil\left\lceil\frac{n{+}1}{2}\right\rceil{-}\frac{n{+}1}{2}\right\rceil, 
\\ \quad\quad\quad\quad\quad d(\lceil n/2 \rceil) + \left\lfloor \lceil n/2 \rceil /2 \right\rfloor + 2{\cdot}\lceil\log(n/2)\rceil \} \\
d(2) =1, \; d(3) =3.    
\end{cases}
\end{equation}

The solution to \eq{cz2} can be upper bounded by the expression $\lfloor n/2 + 0.9937{\cdot}\log^2(n) + 1.1882{\cdot}\log(n) - 14.6772 \rfloor$ (for $n \,{\in}\, [43..1{,}345{,}000]$).  However, the constant in front of $\log^2(n)$ can be improved through a more careful analysis of the recursive decomposition based on \lem{M01}.  We accomplish this by considering two steps of the recursive decomposition at once.  

Each recursive step implements the transformation $T{:}\,M01 \mapsto M01^\prime$, that we further refer to as T-transformation, and the leftover operation $M01^\prime$.  The circuit obtained by two steps of the decomposition can be thought of as a combination of the implementations of two layers of T-transformations (one of which applies two T-transformations to non-overlapping sets) performing the mappings over recursively defined $M01/M01^\prime$ and two layers of the implementations of $M01^\prime$ (one of which applies to two non-overlapping qubit sets) via bipartite graph coloring.  Recall that all four stages implement certain $\czgate$ gate transformations and thus they all commute.  We will employ the commutation property to prove a better bound on the depth of the $\czgate$ circuit.  Specifically, we group the implementations of $M01^\prime$ and all T-transformations into two subcircuits and analyze their depths separately.

The depth of the implementations of two recursively defined layers of $M01^\prime$ is described by the formula 
\[
\lfloor\lceil n/2\rceil/2\rfloor + \lfloor\lceil\lceil n/2\rceil/2\rceil/2\rfloor.
\]

To analyze the depth of two T-transformation layers, recall what transformations they perform.  At the first step of the recursion sets $A$ and $B$ are defined as the first and second halves of the set of variables.  Subsets $A^\prime \,{\subset}\, A$ and $B^\prime \,{\subset}\, B$ are constructed, and the transformation T implements the sets of two all-one rectangles, 
\[
A^\prime \times B{\setminus}B^\prime \text{ and }A{\setminus}A^\prime \times B^\prime,
\]
in parallel, by computing EXORs of variables in the sets $A^\prime$, $A{\setminus}A^\prime$, $B^\prime$, and $B{\setminus}B^\prime$ in logarithmic depth.  At the second step of the recursion sets $AA$, $AB$, $BA$, and $BB$ such that $AA \sqcup AB = A$ and $BA \sqcup BB = B$ are defined with a quarter of the number of qubits in each.  Their subsets $AA^\prime$, $AB^\prime$, $BA^\prime$, and $BB^\prime$ are identified such that the all-one rectangles 

\begin{eqnarray*}
AA^\prime \times AB{\setminus}AB^\prime, \; AA{\setminus}AA^\prime \times AB^\prime, \;
BA^\prime \times BB{\setminus}BB^\prime,
\text{ and } BA{\setminus}BA^\prime \times BB^\prime
\end{eqnarray*}
can be implemented in parallel, since no two sets intersect.  

To implement these two sets of T-transformations, we define $16$ indexed sets $S_{i,j,k}$, where $i$ and $j$ offer $2$ options each, and $k$ offers $4$ options, as follows: $i$ chooses the set $S$ between $A$ and $B$, $j$ chooses between $S^\prime$ and $S{\setminus}S^\prime$, and $k$ chooses between $SA^\prime$, $SA{\setminus}SA^\prime$, $SB^\prime$, and $SB{\setminus}SB^\prime$.  The set $S_{i,j,k}$ is defined as the intersection of the three sets defined by the choice of $i$, $j$, and $k$.  For example, if $i$ chose $A$, $j$ chose $S{\setminus}S^\prime$, and $k$ chose $SB^\prime$, $S_{1,2,3} = A \cap (A{\setminus}A^\prime) \cap AB^\prime = (A{\setminus}A^\prime) \cap AB^\prime$ (here, the enumeration of lists for $i,j,k$ starts with 1).

By definition, no two sets $S_{i,j,k}$ overlap, and each contains no more than $\lceil\lceil n/2\rceil/2\rceil$ qubits.  Thus, EXORs of variables in each can be implemented by a $\cnotgate$ gate circuit in depth at most $\lceil \log \lceil\lceil n/2\rceil/2\rceil \rceil$. The $\czgate$ gate transformations to be applied to these sets can be described as rectangles 
\begin{eqnarray*}
(S_{1,1,1}\oplus S_{1,1,2}\oplus S_{1,1,3}\oplus S_{1,1,4}) 
\times (S_{2,2,1}\oplus S_{2,2,2}\oplus S_{2,2,3}\oplus S_{2,2,4}) \text{ and} \\
(S_{1,2,1}\oplus S_{1,2,2}\oplus S_{1,2,3}\oplus S_{1,2,4})
\times (S_{2,1,1}\oplus S_{2,1,2}\oplus S_{2,1,3}\oplus S_{2,1,4}),
\end{eqnarray*}
applied in parallel, followed by rectangles 
\begin{eqnarray*}
(S_{1,1,1}\oplus S_{1,2,1}) \times (S_{1,1,4}\oplus S_{1,2,4}), \;\;
(S_{1,1,2}\oplus S_{1,2,2}) \times (S_{1,1,3}\oplus S_{1,2,3}), \\
(S_{2,1,1}\oplus S_{2,2,1}) \times (S_{2,1,4}\oplus S_{2,2,4}), \text{ and }
(S_{2,1,2}\oplus S_{2,2,2}) \times (S_{2,1,3}\oplus S_{2,2,3}), \\
\end{eqnarray*}
applied in parallel.  The rectangles are introduced in the same order as they are discussed in the previous paragraph.  Since these are $4{\times}4$ and $2{\times}2$ rectangles, they take total depth $4+2=6$ to implement as a $\czgate$ circuit.  Thus, the total depth to implement T-transformations is $2 {\cdot}\lceil \log \lceil\lceil n/2\rceil/2\rceil \rceil + 6$, and the combined depth of two stages of the recursive decomposition is 
\[
\lfloor\lceil n/2\rceil/2\rfloor + \lfloor\lceil\lceil n/2\rceil/2\rceil/2\rfloor + 2{\cdot} \lceil \log \lceil\lceil n/2\rceil/2\rceil \rceil + 6.
\]

Based on the above analysis, the final form the recursion takes, further improving \eq{cz2}, is 
\begin{equation}\label{eq:cz3}
\begin{cases}
\begin{split}
    d(n) =  \text{MIN} & \{ n-\lceil\lceil(n{+}1)/2\rceil{-}(n{+}1)/2\rceil, \\
        & d(\lceil n/2 \rceil) + \left\lfloor \lceil n/2 \rceil /2 \right\rfloor + 2{\cdot}\lceil\log(n/2)\rceil, \\
        & d(\lceil\lceil n/2\rceil/2\rceil) + \lfloor\lceil n/2\rceil/2\rfloor + \lfloor\lceil\lceil n/2\rceil/2\rceil/2\rfloor + 2{\cdot}\lceil \log \lceil\lceil n/2\rceil/2\rceil \rceil + 6 \}
\end{split}\\
d(1)=0, \;d(2)=1, \; d(3)=3.    
\end{cases}
\end{equation}

We numerically upper bounded the solution to \eq{cz3} by the expression $\lfloor n/2 + 0.4993{\cdot}\log^2(n) + 3.0191{\cdot}\log(n) - 10.9139\rfloor$ for the range of values $n$ of interest.

\end{proof}

We illustrate the comparison of the best previously known bound on the depth of $\czgate$ circuits to the exact solution of the recursion \eq{cz3} and the upper bound given in \thm{1} in \fig{cz}. Note that the exact solution of the recursion \eq{cz3} gives slightly lower numbers than those made available by the upper bound.  This is because the discrete operations ceiling and floor are not easy to model by continuous functions used in the formulation of the upper bound.  At the full scale, see \fig{cz}(b), the difference between exact solution and the upper bound given is visually undetectable, and our result can be seen to improve the best known previously roughly by a factor of two (therefore, agreeing with the asymptotics). 

\begin{figure*}[t]
{\centering
\begin{tabular}{cc}
\includegraphics[width=0.48\textwidth]{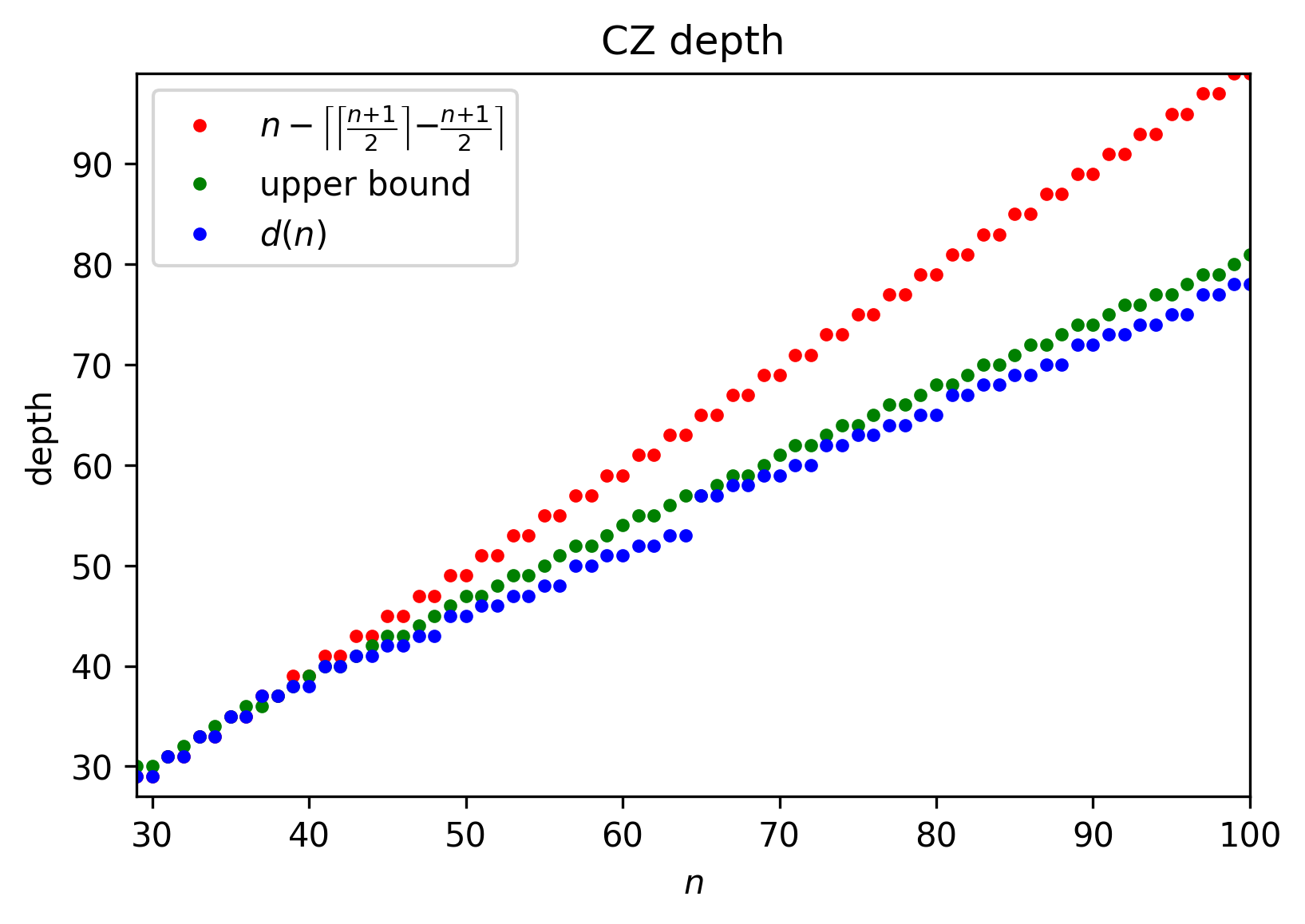} &
\includegraphics[width=0.48\textwidth]{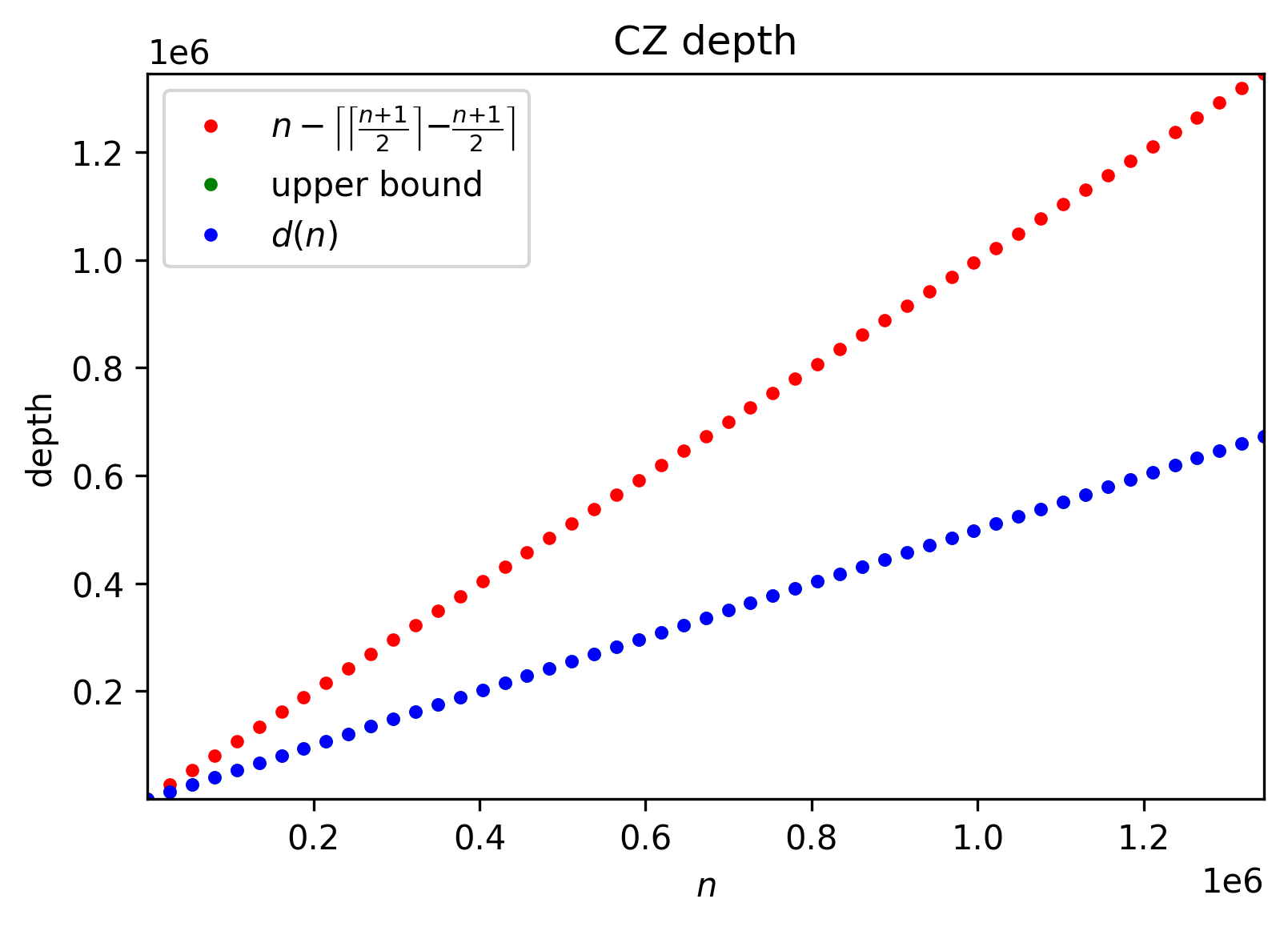} \\
(a) & (b) 
\end{tabular}
}
\caption{Comparison of the best previously known bound  on the $\czgate$ circuit depth (red dots) to the upper bound proved in \thm{1} (green dots) to the solution of the recursion \eq{cz3} (blue dots). (a) focuses on a small number of qubits $n\,{\leq}\,100$ and (b) illustrates the comparison for the full range of values $n$ considered.}
\label{fig:cz}
\end{figure*}


\subsection{CNOT circuits} \label{ssec:cnot}

\begin{figure*}[t]
{\centering
\begin{tabular}{cc}
\includegraphics[width=0.48\textwidth]{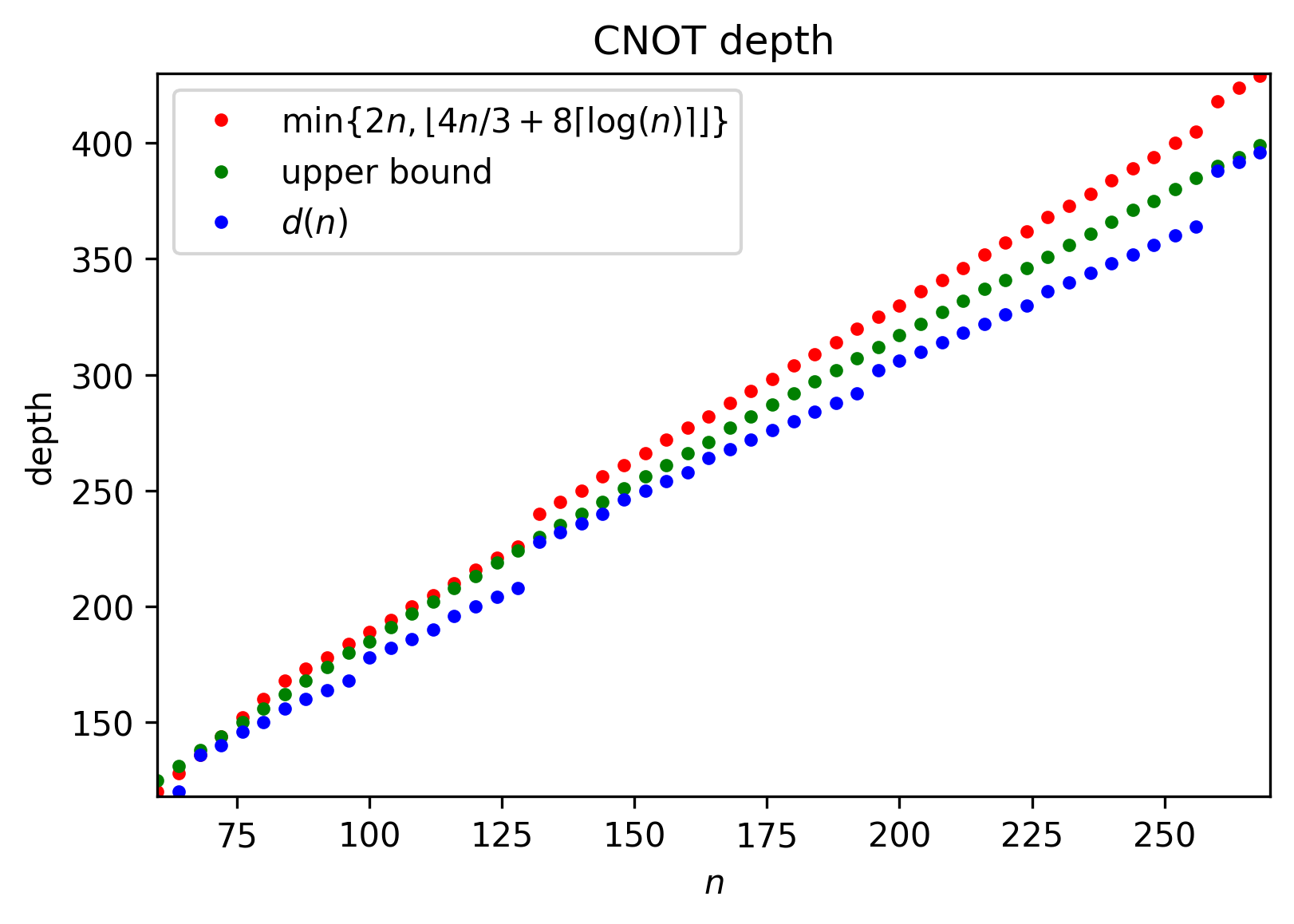} &
\includegraphics[width=0.48\textwidth]{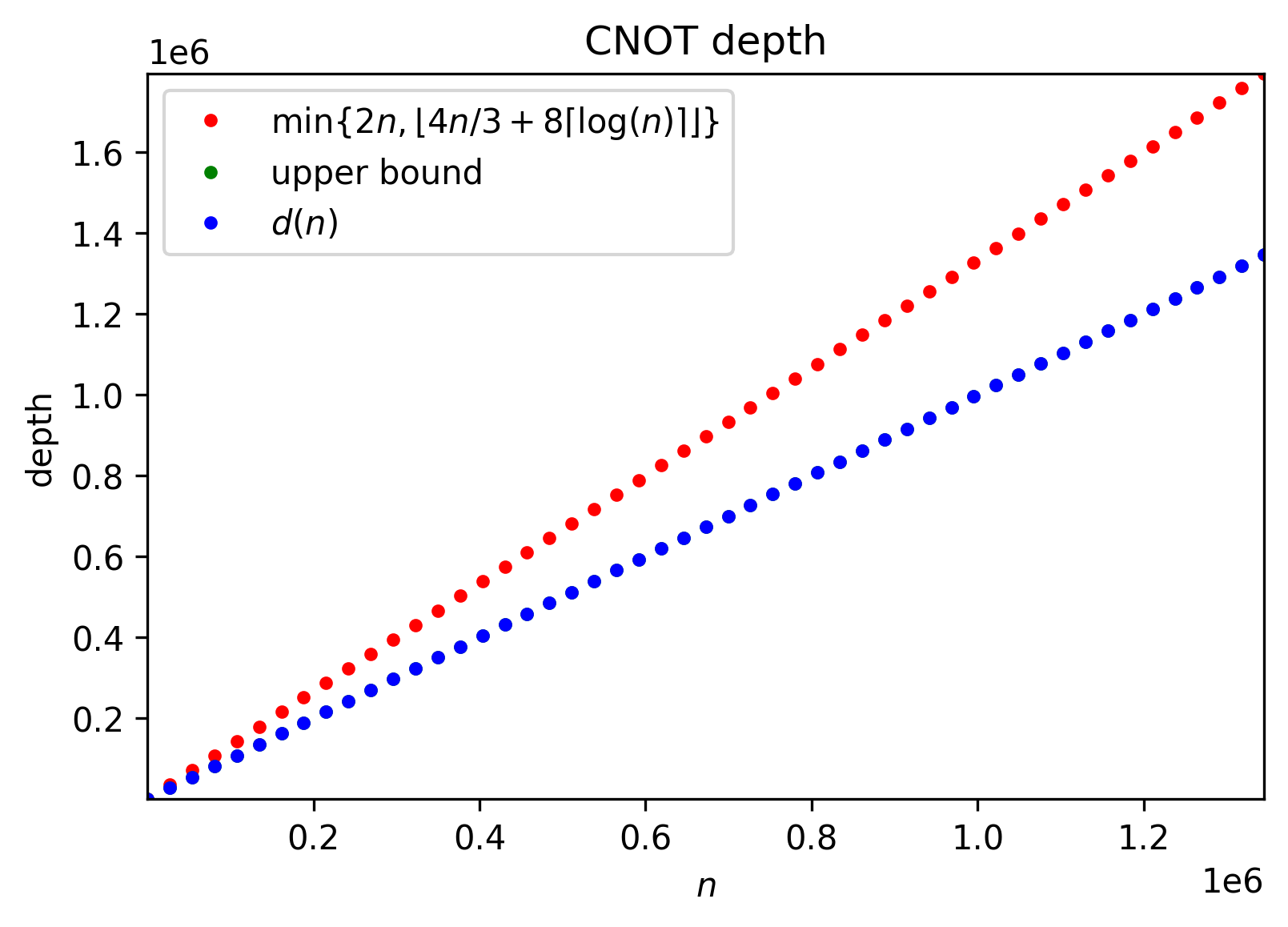} \\
(a) & (b) 
\end{tabular}
}
\caption{Comparison of the best previously known bound on the $\cnotgate$ circuit depth \cite{de2021reducing, kutin2007computation} (red dots) to the upper bound proved in \lem{cnot} (green dots) to the solution of the recursion \eq{cnotrecursion} (blue dots). (a) focuses on a small number of qubits $n\,{<}\,270$ and (b) illustrates the comparison for the full range of values $n$ considered.}
\label{fig:cnot}
\end{figure*}

Here we extend the construction of depth-efficient $\czgate$ circuits to obtain depth-efficient implementations of linear reversible circuits.  A linear reversible function can be implemented exactly or up to the SWAPping of output qubits, also known as qubit reordering.  An implementation up to qubit reordering may be preferred since the proper qubit SWAPping may be obtained classically, allowing to outsource this task to a classical computer and thus minimize the expensive quantum resources used.  The following Lemma reports an optimized depth figure for linear reversible functions and highlights that a depth reduction by $6$ is possible to achieve if it suffices to implement the desired linear function up to qubit reordering.

\begin{lemma}\label{lem:cnot}
For $n\,{\in}\,[70..1{,}345{,}000]$ an $n$-qubit linear reversible transformation $R$ can be implemented in depth no more than $\lfloor n + 1.9496{\cdot}\log^2(n) + 3.5075{\cdot}\log(n) - 29.4269 \rfloor$ up to qubit reordering and depth $\lfloor n + 1.9496{\cdot}\log^2(n) + 3.5075{\cdot}\log(n) - 23.4269 \rfloor$ exactly as a circuit over $\{\cnotgate, \czgate, \hgate\}$ gates.
\end{lemma}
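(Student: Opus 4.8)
The plan is to reduce the synthesis of a general linear reversible transformation $R$ to the $\czgate$-circuit machinery already developed. I view $R$ as an invertible matrix over $\mathbb{F}_2$ and start from a $PLU$ decomposition $R = P\cdot L\cdot U$ (Gaussian elimination with pivoting), where $P$ is a permutation, $L$ is unit lower triangular, and $U$ is unit upper triangular; such a factorization exists for every invertible matrix. Reading the circuit right-to-left, it applies $U$, then $L$, then $P$, so $\mathrm{depth}(R) \le \mathrm{depth}(U) + \mathrm{depth}(L) + \mathrm{depth}(P)$. The central claim I will establish is that each triangular factor costs essentially the same as a $\czgate$ circuit from \thm{1}, namely roughly $n/2$, so the two factors sum to roughly $n$, while $P$ is free up to qubit reordering and cheap otherwise.

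First I would show that a unit upper triangular $U$ obeys the same recursion as \eq{cz2}. Splitting the qubits into the first half $A$ and second half $B$ gives $U = \begin{pmatrix} U_A & C \\ 0 & U_B \end{pmatrix}$, which factors exactly as $\begin{pmatrix} I & C' \\ 0 & I \end{pmatrix}\begin{pmatrix} U_A & 0 \\ 0 & U_B \end{pmatrix}$ with $C' = C U_B^{-1}$. The block-diagonal factor is $U_A$ and $U_B$ run in parallel (the recursive call), while $\begin{pmatrix} I & C' \\ 0 & I \end{pmatrix}$ is a commuting set of $\cnotgate$ gates with controls in $B$ and targets in $A$ --- a ``$\cnotgate$ rectangle''. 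The key observation is that conjugating the target qubits $A$ by $\hgate$ turns each such $\cnotgate$ into a $\czgate$, so the rectangle becomes precisely an $M01$ transformation over $A$ and $B$; since $\hgate$ is single-qubit it adds zero depth, so by \lem{M01} the rectangle costs $\lfloor\lceil n/2\rceil/2\rfloor + 2\lceil\log(n/2)\rceil$. Hence $\mathrm{depth}(U)$ satisfies the recursion \eq{cz2} with matching base cases, and the transpose block structure handles $L$ identically. Because each rectangle carries its own pair of cancelling $\hgate$ layers and single-qubit gates are free, the Hadamard layers inserted at overlapping qubit sets across recursion levels never interfere and never contribute to the depth.

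Next I would account for the permutation $P$, the final operation of the circuit. Up to output qubit reordering $P$ is applied classically and costs nothing, which is the source of the $6$-gap savings. To realize it exactly I use that every permutation is a product of two involutions, each a set of disjoint transpositions realizable as one layer of parallel $\swapgate$ gates; since a $\swapgate$ is three $\cnotgate$ gates and disjoint swaps run in parallel, each layer has depth $3$ and $P$ has depth at most $6$. Summing yields $\mathrm{depth}(R)\le 2\,d_{\mathrm{tri}}(n)+6$ exactly and $2\,d_{\mathrm{tri}}(n)$ up to reordering, where $d_{\mathrm{tri}}(n)$ solves \eq{cz2}; doubling the bound $n/2 + 0.9937\log^2(n) + 1.1882\log(n) - 14.6772$ for that recursion and adding the constant produces the advertised figures after a numerical re-fit over $n\in[70..1{,}345{,}000]$ (the re-fit shifts weight from the $\log^2$ to the $\log$ term, exactly as in \thm{1}, explaining why $1.9496$ sits slightly below $2\cdot 0.9937$).

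The most delicate step is the second one: verifying level by level that the $\cnotgate$-rectangle-to-$\czgate$-$M01$ translation is valid and that the interleaved free $\hgate$ layers compose to the identity where required, so that the triangular recursion is genuinely \emph{identical} to \eq{cz2} rather than merely analogous. I would also confirm that the depth-$6$ permutation bound is worst-case tight, so the separation between the exact and up-to-reordering bounds is exactly $6$, and, as in \thm{1}, check the closed-form expression numerically against the exact recursion solution over the full range of $n$ instead of attempting a symbolic derivation.
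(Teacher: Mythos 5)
Your construction is essentially the paper's own: an LU/PLU factorization with the permutation realized by two parallel \swapgate{} layers (depth $6$, waived up to reordering), each triangular factor split recursively into half-size diagonal blocks plus an off-diagonal $\cnotgate$ rectangle that is rewritten as $\hgate_A\, M01\, \hgate_A$ and implemented via \lem{M01}, and the closed form obtained by numerically bounding the resulting recursion. The one imprecision is your claim that the triangular depth satisfies \eq{cz2} ``with matching base cases'': the correct recursion is \eq{cnotrecursion}, whose base case is $d(3)=2$ rather than $3$ and whose alternative branch is the recursive $d(\lceil n/2\rceil)+\lceil n/2\rceil$ (bipartite edge coloring of the rectangle) rather than Vizing's non-recursive bound, which is precisely why the fitted coefficients are $0.9748$ and $1.7538$ per factor instead of the \eq{cz2} values you quote---but your stated plan to fit against the exact recursion solution would surface and correct this.
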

\begin{proof}
We start with the LU decomposition $R\,{=}\,LU$, where $L$ is lower-triangular and $U$ is upper-triangular invertible Boolean matrices.  Recall that the LU decomposition exists subject to proper row and/or column ordering.  Such row/column reordering can be implemented as a SWAPping circuit with the SWAP depth of no more than $2$, translating to the two-qubit gate depth (by those gates considered in this work as contributing to depth) of $6$.  Thus, the difference between the depths of implementations up to qubit reordering and the exact one is a constant equal to $6$.  In the following, we show that each $L$ and $U$ stage can be implemented in depth $\lfloor n/2 + 0.9748{\cdot}\log^2(n) + 1.7538{\cdot}\log(n) - 14.7134\rfloor$, and thus the total depth of $\cnotgate$ circuits is upper bounded by the expression $\lfloor n + 1.9496{\cdot}\log^2(n) + 3.5075{\cdot}\log(n) - 23.4269 \rfloor$. 

Without loss of generality, focus on $U$.  Divide the set of qubits into two, set $A$ with the first $\lceil n/2 \rceil$ qubits and set $B$ with the last $\lfloor n/2 \rfloor$ qubits (this assumes that the qubits are already ordered so as to accept the LU decomposition).  The operation $R$ can be written as the block matrix product 

\begin{equation}\label{eq:3matrices}
R =  \begin{bmatrix} R_A & W01 \\ 0 & R_B \end{bmatrix} =
\begin{bmatrix} R_A & 0 \\ 0 & I \end{bmatrix}
\begin{bmatrix} I & 0 \\ 0 & R_B \end{bmatrix}
\begin{bmatrix} I & R_A^{-1}W01 \\ 0 & I \end{bmatrix},
\end{equation}
where $R_A$ is the $\lceil{n/2}\rceil {\times} \lceil n/2 \rceil$ upper triangular matrix obtained by restricting $R$ to the set of qubits $A$, $R_B$ is defined similarly, $W01$ is the $\lceil n/2 \rceil {\times} \lfloor n/2 \rfloor$ top right block of $R$, and $I$ and $0$ are the identity and zero matrices of proper dimensions.

Assuming $d(n)$ denotes the depth of the implementation of an n-qubit upper triangular matrix, first two terms in the decomposition \eq{3matrices} can be implemented in parallel, i.e. in depth $d\left(\lceil n/2 \rceil\right)$.  The third term can be implemented as the $\cnotgate$ gate circuit where individual gates have targets in the qubit set $A$ and controls in the set $B$.  This transformation can thus be written as the circuit $\hgate_A M01 \hgate_A$, where $\hgate_A$ applies Hadamard gates to all qubits in the set $A$, and $M01$ is an $A{\times}B$ $\czgate$ rectangle.  By inducing the bipartite graph coloring argument, we conclude that the rectangle $M01$ can be implemented in depth at most $\lceil n/2 \rceil$.  This results in the recursion 
\begin{equation}
\begin{cases}
d(n) = d(\lceil n/2 \rceil) + \lceil n/2 \rceil \\
d(2)=1, \;d(3)=2.
\end{cases}    
\end{equation}
The solution, $d^*(n)$, is almost equal to $n$.  For the range of values of interest, we can upper bound it as $d^*(n) \leq n+\lfloor\log(n{-}1)\rfloor-2$.

On the other hand, \lem{M01} can be used to implement $M01$ in depth  $\lfloor \lceil n/2 \rceil/2 \rfloor  + 2{\cdot}\lceil\log(\lceil n/2 \rceil)\rceil$.  Thus, the recursion describing the overall implementation depth can be written as  
\begin{equation}\label{eq:cnotrecursion}
\begin{cases}
d(n) = \text{MIN} \{ d(\lceil n/2 \rceil) + \lceil n/2 \rceil, \\
\quad\quad\quad\quad\quad
d(\lceil n/2 \rceil) + \lfloor \lceil n/2 \rceil/2 \rfloor  + 2{\cdot}\lceil\log(\lceil n/2 \rceil)\rceil \} \\
d(2)=1, \; d(3)=2.
\end{cases}
\end{equation}
We calculated that the solution of recursion \eq{cnotrecursion} can be upper bounded by the expression 
\[
\lfloor n/2+0.9748{\cdot}\log^2(n) + 1.7538{\cdot}\log(n) - 14.7134\rfloor
\]
for the range of values $n\,{\in}\,[70..1{,}345{,}000]$.

We employ the solution of the recursion \eq{cnotrecursion} within the LU decomposition to obtain the desired upper bound, $\lfloor n + 1.9496{\cdot}\log^2(n) + 3.5075{\cdot}\log(n) - 23.4269 \rfloor$.  We start the range with $n\,{=}\,70$, because it marks the smallest $n$ for which our solution based on the recursion \eq{cnotrecursion} beats the best known upper bound of $\min\{2n,\lfloor 4n/3+8\lceil\log(n)\rceil\rfloor \}$ \cite{de2021reducing, kutin2007computation}.
\end{proof}

Note that the circuit constructed in \lem{cnot} relies on the gates from the library $\{\cnotgate, \czgate, \hgate\}$. It is convenient to use this gate library for didactic reasons, however, the circuit constructed in \lem{cnot} can be rewritten using the same number of entangling gates and the same depth, but relying on the $\cnotgate$ gates only.

\begin{proposition}\label{prop:cnot}
The circuit constructed in \lem{cnot} can be implemented in the same depth and with the same entangling gate count as the original, but using only the $\cnotgate$ gates.
\end{proposition}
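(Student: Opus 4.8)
The plan is to show that the Hadamard and $\czgate$ gates in the circuit of \lem{cnot} can be removed gate-by-gate, each being replaced by a single $\cnotgate$ on the very same pair of qubits and in the same layer, so that the depth and the entangling-gate count are preserved exactly. First I would locate the offending gates. Inspecting the construction, all $\hgate$ and all $\czgate$ gates occur only inside blocks of the form $\hgate_A M01 \hgate_A$, which implement the third (unit upper-triangular) factor of \eq{3matrices} at every level of the recursion, for both the $L$ and the $U$ stage; here $M01$ is an $A{\times}B$ $\czgate$-rectangle and $\hgate_A$ is a layer of Hadamards on every qubit of $A$. The base cases of the recursion and the qubit-reordering SWAP layers are already expressible over $\cnotgate$ (a SWAP being three $\cnotgate$s), so it suffices to treat a single such block.

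The key structural observation, read off from \lem{rectangles} and \lem{M01}, is that inside $M01$ every two-qubit gate is of exactly one of three kinds: a $\cnotgate$ with both endpoints in $A$, a $\cnotgate$ with both endpoints in $B$ (these build and unbuild the EXORs), or a $\czgate$ with one endpoint in $A$ and one in $B$ (the middle gates of each rectangle and the color-class gates of the bipartite edge coloring). In particular no $\czgate$ is internal to $A$ or to $B$, and this is what localizes the conversion.

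The main step is to write $\hgate_A M01 \hgate_A = \prod_g (\hgate_A\, g\, \hgate_A)$ by inserting $\hgate_A\hgate_A = I$ between consecutive gates $g$ of $M01$, thereby conjugating each two-qubit gate individually. I would then apply the two elementary identities $(\hgate{\otimes}\hgate)\,\cnotgate(i,j)\,(\hgate{\otimes}\hgate) = \cnotgate(j,i)$ and $\hgate_j\,\czgate(i,j)\,\hgate_j = \cnotgate(i,j)$ (with the factors of $\hgate_A$ on qubits outside the support of $g$ commuting through and cancelling). A within-$B$ $\cnotgate$ is left unchanged; a within-$A$ $\cnotgate$ gets conjugated on both endpoints and merely reverses its direction; and a cross $\czgate(a,b)$ with $a\in A$, $b\in B$ is conjugated only on its $A$-endpoint and turns into $\cnotgate(b,a)$. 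In every case a single $\cnotgate$ results on the same qubit pair, all Hadamards cancel in pairs, and—being single-qubit—they never occupied an entangling layer; hence the block, and therefore the whole circuit, is rewritten over $\cnotgate$ alone with identical depth and entangling-gate count.

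The part that needs the most care is arguing that the depth is preserved \emph{exactly} rather than merely asymptotically: I would stress that conjugation by $\hgate_A$ sends each two-qubit gate to a two-qubit gate on the identical pair of qubits, never splitting a gate or moving it between layers, so that gates that were parallel remain parallel and the two-qubit schedule of $M01$ is reproduced verbatim. The only remaining bookkeeping subtlety is confirming that the three-way gate classification of $M01$ survives the depth-by-one reduction of \lem{rectangles} (where the central CNOT–CZ–CNOT sandwich is rewritten into two cross $\czgate$s); since those replacement gates are again cross gates, the classification—and hence the whole argument—is unaffected.
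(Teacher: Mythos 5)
Your proposal is correct and follows essentially the same route as the paper's proof: both rest on the observation that between matching $\hgate_A$ layers every internal gate is a $\cnotgate$ and every straddling gate is a $\czgate$, and both apply the same two conjugation identities (reversing internal $\cnotgate$s, turning straddling $\czgate$s into $\cnotgate$s) recursively through the nested Hadamard layers. Your insertion of $\hgate_A\hgate_A = I$ between consecutive gates is just a more formal phrasing of the paper's ``push the left Hadamard layer into the right one,'' and your check that the depth-by-one rewriting in \lem{rectangles} only produces straddling $\czgate$s is a welcome extra detail.
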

\begin{proof}
Given the division of the set of all qubits into two non-overlapping sets $A$ and $B$, a two-qubit gate is called internal to a given set if both qubits it operates on belong to this set and straddling iff it operates over two qubits belonging to different sets.  Clearly, all entangling gates in such circuit are either internal to one of the sets or straddling.  

Choose the sets $A$ and $B$ from the proof of \lem{cnot}.  Observe that we apply Hadamard gates to all qubits in the set $A$ in two layers.  Between those two Hadamard gate layers, all internal gates are $\cnotgate$ gates and all straddling gates are $\czgate$ gates.  This means that we can push the left layer of Hadamard gates to the right layer to cancel both, while flipping controls and targets of some $\cnotgate$ gates and turning $\czgate$ gates into $\cnotgate$ gates using the following rules: 

\begin{center}
	\includegraphics[width=0.2\textwidth]{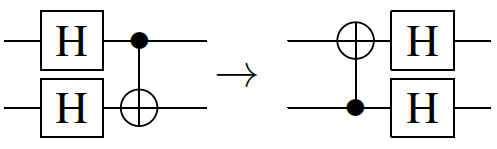} 
	\raisebox{3.5mm}{\,\text{ and}} \\
	\includegraphics[width=0.2\textwidth]{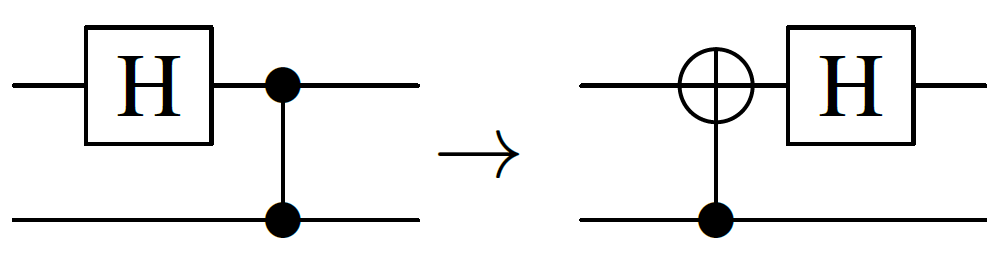}
	\raisebox{3.5mm}{.}
\end{center}

\noindent Observe that this operation, when applied recursively to the matching pairs of layers of Hadamards, eliminates all Hadamard gates and turns all $\czgate$ gates into $\cnotgate$s.  Thus, the transformed circuit has only the $\cnotgate$ gates.
\end{proof}

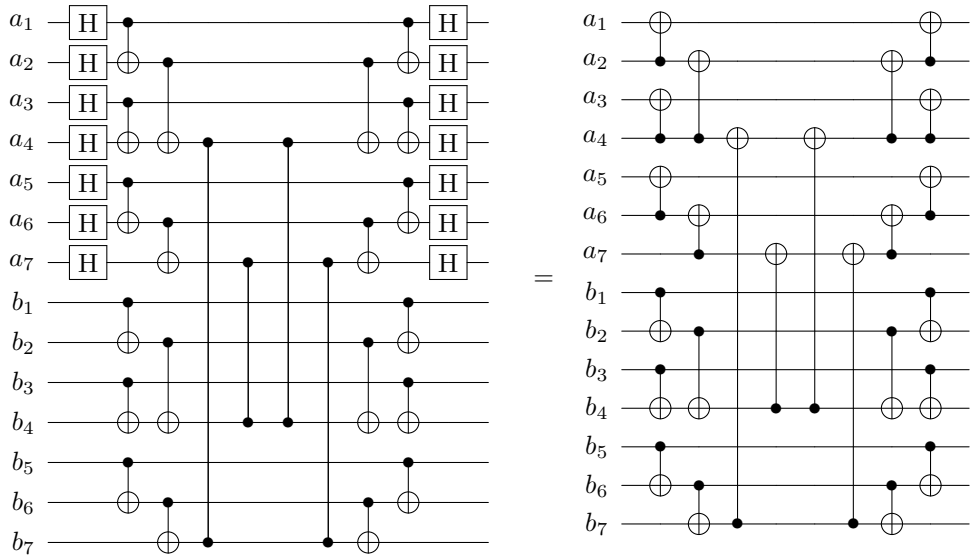
\begin{figure*}[t]
\[ 
\Qcircuit @C=0.1em @R=0.1em @!{
\lstick{a_1} &\gate{\hgate}& \ctrl{1}  & \qw       & \qw      &\qw       & \qw       & \qw       & \qw       & \ctrl{1}  &\gate{\hgate}& \qw \\
\lstick{a_2} &\gate{\hgate}& \targ     & \ctrl{2}  & \qw      &\qw       & \qw       & \qw       & \ctrl{2}  & \targ     &\gate{\hgate}& \qw \\
\lstick{a_3} &\gate{\hgate}& \ctrl{1}  & \qw       & \qw      &\qw       & \qw       & \qw       & \qw       & \ctrl{1}  &\gate{\hgate}& \qw \\
\lstick{a_4} &\gate{\hgate}& \targ     & \targ     & \ctrl{10}&\qw       & \ctrl{7}  & \qw       & \targ     & \targ     &\gate{\hgate}& \qw \\
\lstick{a_5} &\gate{\hgate}& \ctrl{1}  & \qw       & \qw      &\qw       & \qw       & \qw       & \qw       & \ctrl{1}  &\gate{\hgate}& \qw \\
\lstick{a_6} &\gate{\hgate}& \targ     & \ctrl{1}  & \qw      &\qw       & \qw       & \qw       & \ctrl{1}  & \targ     &\gate{\hgate}& \qw \\
\lstick{a_7} &\gate{\hgate}& \qw       & \targ     & \qw      &\ctrl{4}  & \qw       & \ctrl{7}  & \targ     & \qw       &\gate{\hgate}& \qw \\
\lstick{b_1} &\qw   & \ctrl{1}  & \qw       & \qw      &\qw       & \qw       & \qw       & \qw       & \ctrl{1}  &\qw   & \qw \\
\lstick{b_2} &\qw   & \targ     & \ctrl{2}  & \qw      &\qw       & \qw       & \qw       & \ctrl{2}  & \targ     &\qw   & \qw \\
\lstick{b_3} &\qw   & \ctrl{1}  & \qw       & \qw      &\qw       & \qw       & \qw       & \qw       & \ctrl{1}  &\qw   & \qw \\
\lstick{b_4} &\qw   & \targ     & \targ     & \qw      &\ctrl{-4} & \ctrl{-7} & \qw       & \targ     & \targ     &\qw   & \qw \\
\lstick{b_5} &\qw   & \ctrl{1}  & \qw       & \qw      &\qw       & \qw       & \qw       & \qw       & \ctrl{1}  &\qw   & \qw \\
\lstick{b_6} &\qw   & \targ     & \ctrl{1}  & \qw      &\qw       & \qw       & \qw       & \ctrl{1}  & \targ     &\qw   & \qw \\
\lstick{b_7} &\qw   & \qw       & \targ     &\ctrl{-10}&\qw       & \qw       & \ctrl{-7} & \targ     & \qw       &\qw   & \qw
}
\hspace{6mm}\raisebox{-34mm}{=}\hspace{9mm}
\Qcircuit @C=0.64em @R=0.64em @!{
\lstick{a_1} & \targ     & \qw       & \qw      &\qw       & \qw       & \qw       & \qw       & \targ     & \qw \\
\lstick{a_2} & \ctrl{-1} & \targ     & \qw      &\qw       & \qw       & \qw       & \targ     & \ctrl{-1} & \qw \\
\lstick{a_3} & \targ     & \qw       & \qw      &\qw       & \qw       & \qw       & \qw       & \targ     & \qw \\
\lstick{a_4} & \ctrl{-1} & \ctrl{-2} & \targ    &\qw       & \targ     & \qw       & \ctrl{-2} & \ctrl{-1} & \qw \\
\lstick{a_5} & \targ     & \qw       & \qw      &\qw       & \qw       & \qw       & \qw       & \targ     & \qw \\
\lstick{a_6} & \ctrl{-1} & \targ     & \qw      &\qw       & \qw       & \qw       & \targ     & \ctrl{-1} & \qw \\
\lstick{a_7} & \qw       & \ctrl{-1} & \qw      &\targ     & \qw       & \targ     & \ctrl{-1} & \qw       & \qw \\
\lstick{b_1} & \ctrl{1}  & \qw       & \qw      &\qw       & \qw       & \qw       & \qw       & \ctrl{1}  & \qw \\
\lstick{b_2} & \targ     & \ctrl{2}  & \qw      &\qw       & \qw       & \qw       & \ctrl{2}  & \targ     & \qw \\
\lstick{b_3} & \ctrl{1}  & \qw       & \qw      &\qw       & \qw       & \qw       & \qw       & \ctrl{1}  & \qw \\
\lstick{b_4} & \targ     & \targ     & \qw      &\ctrl{-4} & \ctrl{-7} & \qw       & \targ     & \targ     & \qw \\
\lstick{b_5} & \ctrl{1}  & \qw       & \qw      &\qw       & \qw       & \qw       & \qw       & \ctrl{1}  & \qw \\
\lstick{b_6} & \targ     & \ctrl{1}  & \qw      &\qw       & \qw       & \qw       & \ctrl{1}  & \targ     & \qw \\
\lstick{b_7} & \qw       & \targ     &\ctrl{-10}&\qw       & \qw       & \ctrl{-7} & \targ     & \qw       & \qw
}
\]
\caption{Depth-6 implementation of the transformation from Example 1: circuit on the left hand side is obtained by applying \lem{cnot}, and its modification on the right hand side is offered by \prop{cnot}.}
\label{fig:example}
\end{figure*}

We illustrate the constructions in \lem{cnot} and \prop{cnot} with the following Example.
\begin{example}
Consider the $14{\times}14$ linear reversible transformation given by the Boolean matrix 
\[
L\,{:=}\,\left[\begin{smallmatrix}
1 & 0 & 0 & 0 & 0 & 0 & 0 & 1 & 1 & 1 & 1 & 1 & 1 & 1 \\
0 & 1 & 0 & 0 & 0 & 0 & 0 & 1 & 1 & 1 & 1 & 1 & 1 & 1 \\
0 & 0 & 1 & 0 & 0 & 0 & 0 & 1 & 1 & 1 & 1 & 1 & 1 & 1 \\
0 & 0 & 0 & 1 & 0 & 0 & 0 & 1 & 1 & 1 & 1 & 1 & 1 & 1 \\
0 & 0 & 0 & 0 & 1 & 0 & 0 & 1 & 1 & 1 & 1 & 1 & 1 & 1 \\
0 & 0 & 0 & 0 & 0 & 1 & 0 & 1 & 1 & 1 & 1 & 1 & 1 & 1 \\
0 & 0 & 0 & 0 & 0 & 0 & 1 & 1 & 1 & 1 & 1 & 1 & 1 & 1 \\
0 & 0 & 0 & 0 & 0 & 0 & 0 & 1 & 0 & 0 & 0 & 0 & 0 & 0 \\
0 & 0 & 0 & 0 & 0 & 0 & 0 & 0 & 1 & 0 & 0 & 0 & 0 & 0 \\
0 & 0 & 0 & 0 & 0 & 0 & 0 & 0 & 0 & 1 & 0 & 0 & 0 & 0 \\
0 & 0 & 0 & 0 & 0 & 0 & 0 & 0 & 0 & 0 & 1 & 0 & 0 & 0 \\
0 & 0 & 0 & 0 & 0 & 0 & 0 & 0 & 0 & 0 & 0 & 1 & 0 & 0 \\
0 & 0 & 0 & 0 & 0 & 0 & 0 & 0 & 0 & 0 & 0 & 0 & 1 & 0 \\
0 & 0 & 0 & 0 & 0 & 0 & 0 & 0 & 0 & 0 & 0 & 0 & 0 & 1
\end{smallmatrix}\right]
\]
\noindent A na\"ive algorithm focusing on depth optimization may implement this linear transformation in depth 7 by noticing that all off-diagonal ones with matrix indices over non-overlapping sets of qubits can be turned into zeroes by applying the $\cnotgate$ gates with controls in the second half of the set of qubits and target in the first half. A tight schedule exists that squeezes all 49 such $\cnotgate$ gates in depth $49/7=7$. 

A better circuit of depth 6 can be obtained by applying \lem{cnot}.  First, observe that the matrix $L$ is already upper triangular and thus the LU decomposition needs not be developed.  The set $A=\{a_1, a_2, a_3, a_4, a_5, a_6, a_7\}$ contains first $7$ qubits, to which the Hadamard gates are applied, and the set $B=\{b_1, b_2, b_3, b_4, b_5, b_6, b_7\}$ contains the remaining $7$ qubits.  The $7{\times}7$ matrix $R_A$ found in the first quadrant of $L$ gives rise to the $7{\times}7$ all-1 $\czgate$ matrix, and thus the circuit for it can be obtained from \lem{rectangles}.  Recall that this circuit EXORs qubits in the sets $A$ and $B$, applies $\czgate$ gates, and uncomputes the EXORs.  All five stages (opening Hadamards, finding EXOR, applying $\czgate$, uncomputing EXOR, closing Hadamards) are clearly visible in the resulting circuit illustrated in \fig{example} on the left side.  The circuit on the right side of \fig{example} is obtained from the one on the left side by applying \prop{cnot}.
\end{example}

We conclude this subsection with the comparison of the depth of $\cnotgate$ circuits developed in our work to the best known previously in \fig{cnot}.  Similarly to the analogous comparison for $\czgate$ circuits, small values of $n$ reveal a small difference between the exact solution and the upper bound (see \lem{cnot}), that is undetectable by eye over the full range (see \fig{cnot}(b)).  For values of $n$ in the target range, our result improves the best known previously by a factor of almost $4/3$, as expected from the asymptotics.

\subsection{Clifford circuits} \label{sec:clifford}

Recall that a Clifford circuit admits the layered decomposition -X-Z-P-CX-CZ-H-CZ-H-P- \cite{bravyi2021hadamard}.  Adding depths of the implementations of $\czgate$ circuits by \thm{1} (two layers) and $\cnotgate$ circuits by \lem{cnot} (single layer), we obtain the following result.  Note that one of the two -CZ- layers neighbors the -CX- layer, thus allowing to merge the $\cnotgate$ gates used in the largest T-transformation with the -CX- stage; accounting for this results in the reduction of the depth by either $\lceil\log(n/2)\rceil{-}1$ or $\lceil \log \lceil\lceil n/2\rceil/2\rceil \rceil$, depending on the first stage called by the recursion \eq{cz3}.

\begin{lemma}\label{lem:cliff}
For $n\,{\in}\,[43..1{,}345{,}000]$ an $n$-qubit Clifford circuit can be implemented in depth
\[
\lfloor 2n + 2.9487{\cdot}\log^2(n) + 8.4909{\cdot}\log(n) - 44.4798\rfloor.
\]
\end{lemma}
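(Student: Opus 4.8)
The plan is to assemble the Clifford circuit from the layered normal form -X-Z-P-CX-CZ-H-CZ-H-P- of \cite{bravyi2021hadamard} and to charge depth only to its three entangling layers --- the single -CX- layer and the two -CZ- layers --- since the -X-, -Z-, -P-, and -H- layers consist entirely of single-qubit gates and contribute nothing under our cost model. I would implement each -CZ- layer by the construction behind \thm{1} and the -CX- layer by the construction behind \lem{cnot}. Summing two copies of the \thm{1} bound with one copy of the \lem{cnot} bound gives a first, unoptimized estimate
\[
2{\cdot}\lfloor n/2 + 0.4993{\cdot}\log^2(n) + 3.0191{\cdot}\log(n) - 10.9139\rfloor + \lfloor n + 1.9496{\cdot}\log^2(n) + 3.5075{\cdot}\log(n) - 23.4269\rfloor,
\]
which already exhibits the leading term $2n$ together with lower-order terms of the correct order.

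The step I expect to carry the real content is the merge of the -CZ- layer adjacent to the -CX- layer into that -CX- layer. Recall that the \thm{1} implementation of a -CZ- transformation begins, at its outermost recursive step, by computing a family of EXORs through a logarithmic-depth \cnotgate\ subcircuit (the largest T-transformation) before the rectangle of \czgate\ gates is applied. Because the neighboring layer is itself a \cnotgate\ circuit, these boundary \cnotgate\ gates can be absorbed into it instead of contributing fresh depth. I would make this precise by following which branch of the recursion \eq{cz3} is selected at the first step: in the single-step branch the merge removes $\lceil\log(n/2)\rceil{-}1$ layers --- the $-1$ being the same saving already exploited in \lem{rectangles} --- while in the two-step branch it removes $\lceil\log\lceil\lceil n/2\rceil/2\rceil\rceil$ layers. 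Subtracting whichever quantity applies from the unoptimized sum produces the claimed depth.

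As in the proofs of \thm{1} and \lem{cnot}, I would finish by verifying numerically over $n\,{\in}\,[43..1{,}345{,}000]$ that the exact post-merge recursion values are dominated by the closed form $\lfloor 2n + 2.9487{\cdot}\log^2(n) + 8.4909{\cdot}\log(n) - 44.4798\rfloor$; the lower-order coefficients are recovered by fitting a continuous envelope to the discrete data, since the nested ceilings and floors rule out an exact symbolic identity. The main obstacle is not the arithmetic of adding three layer bounds but the justification of the merge: one must check that the absorbed \cnotgate\ gates are compatible with the schedule of the -CX- layer, that no extra depth leaks in at the seam, and that the stated saving holds for both recursive branches and across the parity cases concealed inside the ceilings and floors.
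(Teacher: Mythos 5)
Your proposal matches the paper's argument essentially verbatim: the same -X-Z-P-CX-CZ-H-CZ-H-P- decomposition with depth charged only to the two -CZ- layers (via \thm{1}) and the -CX- layer (via \lem{cnot}), the same merge of the largest T-transformation's \cnotgate\ subcircuit into the adjacent -CX- stage with the identical savings of $\lceil\log(n/2)\rceil{-}1$ or $\lceil\log\lceil\lceil n/2\rceil/2\rceil\rceil$ depending on the branch of \eq{cz3}, and the same concluding numerical fit of the closed form. The approach is correct and is the paper's own.
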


\begin{figure*}[t]
{\centering
\begin{tabular}{cc}
\includegraphics[width=0.48\textwidth]{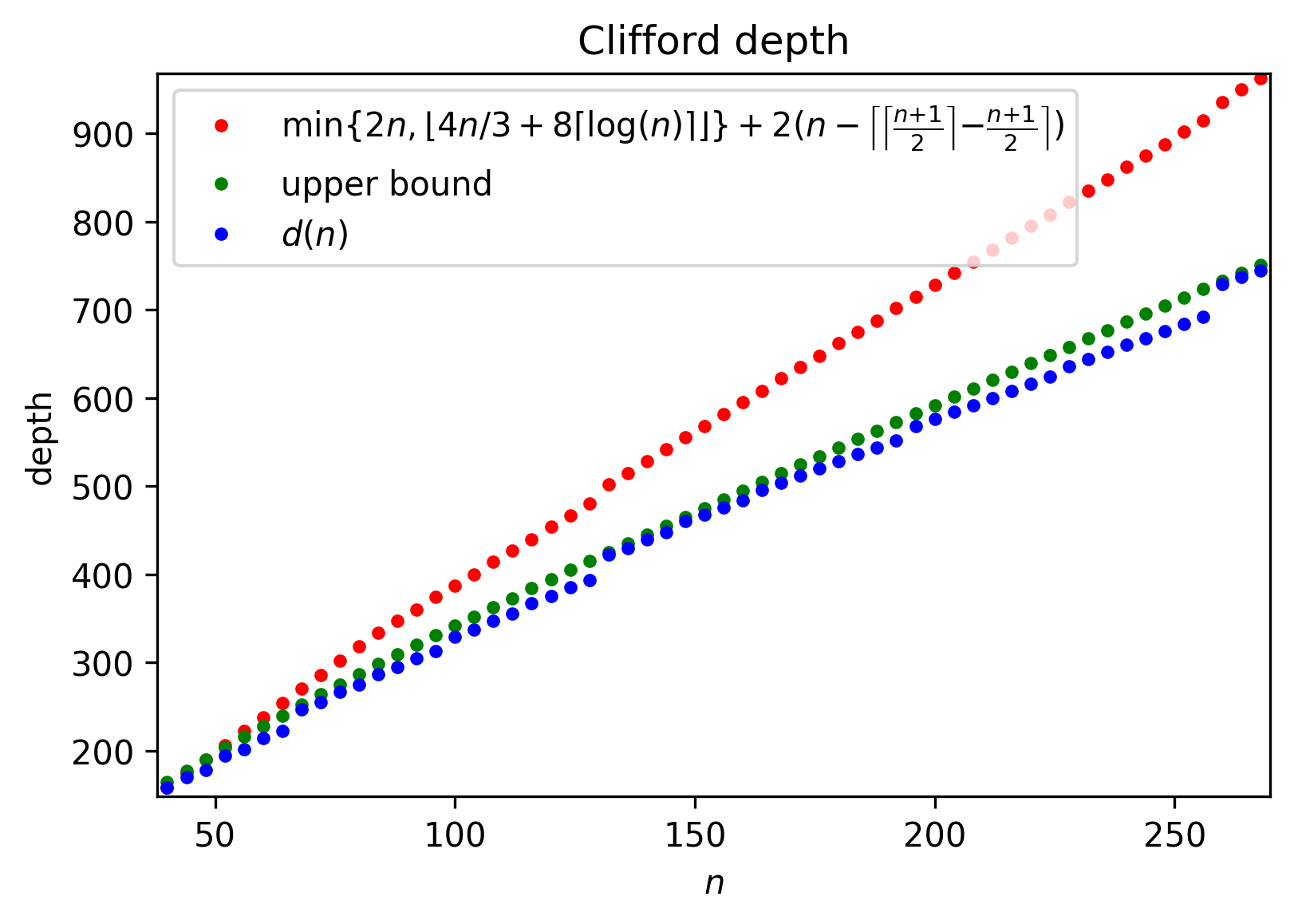} &
\includegraphics[width=0.48\textwidth]{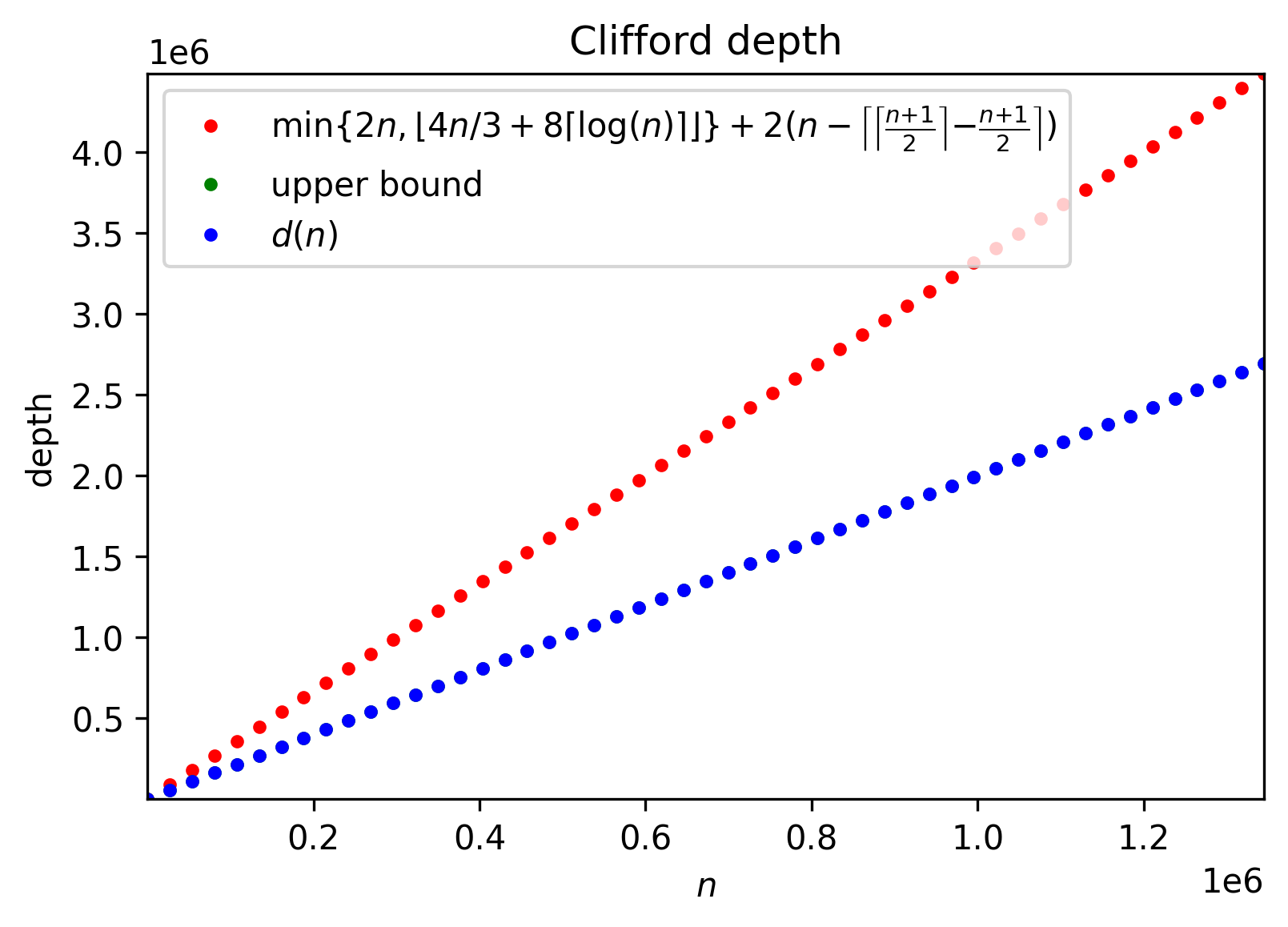} \\
(a) & (b) 
\end{tabular}
}
\caption{Comparison of the best previously known bound on the two-qubit gate depth of Clifford circuits (red dots) to the upper bound established in \lem{cliff} (green dots) and the solution of the respective recursion (blue dots). (a) focuses on a small number of qubits $n\,{<}\,270$ and (b) illustrates the comparison for the full range of values $n$ considered.}
\label{fig:cliff}
\end{figure*}

We illustrated the comparison of the best known depth of Clifford circuits to that offered by our construction, based on the reduced depth of CZ and CNOT circuits (\thm{1} and \lem{cnot}, correspondingly) in \fig{cliff}.

\section{Conclusion}
In this paper, we focused on the study of depth-reduced implementations of $\czgate$ gate quantum circuits spanning a practically relevant number of qubits $n$, $n\,{\leq}\,1{,}345{,}000$.  The improvements in depth were accomplished by implementing $\czgate$ circuits over $\{\czgate,\cnotgate\}$ library rather than relying on the $\czgate$ gates alone.  We extended the methods used to obtain better depth guarantees for $\czgate$ gate circuits to linear reversible and Clifford circuits.  Asymptotic reductions on the $\czgate$, $\cnotgate$, and Clifford circuit depths against state of the art proved in our work are by a factor of $2$, $4/3$, and $5/3$, correspondingly.

\end{document}